\newtheorem{theorem}{Theorem}[section]
\newtheorem{proposition}[theorem]{Proposition}
\begin{document}

\title{Sequence Covering for Efficient Host-Based Intrusion Detection}

\author{Pierre-Francois Marteau
\IEEEcompsocitemizethanks{\IEEEcompsocthanksitem P.-F.. Marteau was with the Institut de Recherche en Informatique et Systemes Aleatoires (IRISA) Laboratory, Universite Bretagne Sud, France.\protect\\
E-mail: https://people.irisa.fr/Pierre-Francois.Marteau/
}
\thanks{Manuscript received xxx, 2017; revised yyy.}}

\markboth{Journal of \LaTeX\ Class Files,~Vol.~14, No.~X, February~2018}%
{Shell \MakeLowercase{\textit{et al.}}: Bare Demo of IEEEtran.cls for Computer Society Journals}

\IEEEtitleabstractindextext{%
\begin{abstract}
This paper introduces a new similarity measure, the covering similarity,  that we formally define for evaluating the similarity between a symbolic sequence and a set of symbolic sequences. A pair-wise similarity can also be directly derived from the covering similarity to compare two symbolic sequences.  An efficient implementation to compute the covering similarity is proposed that uses a suffix-tree data-structure, but other implementations, based on suffix-array for instance, are possible and possibly necessary for handling very large-scale problems. We have used this similarity to isolate attack sequences from normal sequences in the scope of Host-based Intrusion Detection. We have assessed the covering similarity on two well-known benchmarks in the field. In view of the results reported on these two datasets for the state of the art methods, and according to the comparative study we have carried out based on three challenging similarity measures commonly used for string processing or in bioinformatics, we show that the covering similarity is particularly relevant to address the detection of anomalies in sequences of system calls.
\end{abstract}

\begin{IEEEkeywords}
Sequence Covering Similarity, Host-based Intrusion Detection, System Calls, Semi-Supervised Learning, Zero-Day 
\end{IEEEkeywords}}

\maketitle

\IEEEdisplaynontitleabstractindextext

%
\IEEEpeerreviewmaketitle

\IEEEraisesectionheading{\section{Introduction}\label{sec:introduction}}
\IEEEPARstart{I}{ntrusion} Detection Systems (IDS) are more and more heavily challenged by intrusion \textit{scenarios} developed by today's hackers. The number of reported intrusion incidents has dramatically increased during the last few years with very serious consequences for organizations, companies and individuals. As an example, the Troyan horse TINBA (which stands for TINy BAnker) has targeted with apparent success  the worldwide banking system during the last three years \cite{regev2014, Bonderud2016, Bonderud2016b}. The detection of zero-day attacks (attacks that have never been detected before) is even more challenging since no pattern or signature characterizing this kind of attack can be used to identify it. Furthermore, with the development of the IoT, the rate of the production of sequences of system calls, i.e. sequential data used to access, manage, or administrate connected equipments, is exploding.  Hence, the need to develop and use efficient intrusion detection algorithms that can identify, isolate and handle suspicious patterns in sequential information flows is evermore pressing with time.

This paper addresses the detection of (unknown) anomalies in symbolic sequential data with a specific focus on sequences of system calls within the scope of intrusion detection. A system call is a request to the kernel of an operating system to provide a service on behalf of the user's program. System calls are used to manage the file system and available hardware, control processes, and to provide interprocess communication. Thus, a sequence of system calls corresponds to the sequential list of service requests sent by a process to the kernel, and as such, it constitutes a trace that describes the behavior of the monitored process.

The detection of abnormal system call sequences is challenging for the following reasons:
\begin{enumerate}
\item the anomalies are contextual: the occurrence of a system call can be considered as abnormal given its context of occurrence, basically given the subsequences of system calls that occur before and after it,
\item the anomalies can be collective: a subsequence as a whole can be considered as abnormal,
\item the variability of system call sequences is very high mainly because they are of varying length (a low number to a few thousand system calls) and the alphabet on which the system calls are defined is quite high (more than 300 for the Linux system),
\item if subsequences can be seen as discriminative features, the combinatorics increases dramatically with the size of the subsequences. This prohibits the use of $n$-gram features for example, when $n$ is above a low number of system calls (4, 5). 
\end{enumerate} 

Bioinformatics has proposed over the years a great number of similarity measures \cite{Smith1981, Needleman1970, Korf2003} to compare pairs of sequences or provide multiple sequence alignments for sets of sequences. These editing distances cover partially the above-mentioned difficulties and have been used with great success to evaluate similarities and dissimilarities in sequences of nucleotides, some of them allowing for the isolation of abnormal subsequences \cite{Lamperti1992}. In addition, they have been successfully adapted to cope with sequences of system calls \cite{Quan2012}. However, the quadratic time complexity of most of these similarity measures and the need for the computation of the whole similarity matrix limit their use to small to medium size problems. Approximate computations of these measures along with extremely parallel hardware are required to address large-scale problems \cite{Ho2017}.

In this context, designing an algorithm that can somehow profit from subsequences of any size to evaluate the pairwise similarity of sequences with a significantly lower time complexity is challenging. Furthermore, in the scope of anomaly-based intrusion detection, if this similarity measure can be used to compare a test sequence to a set of normal sequences, it can result in a quite and efficient intrusion detection approach.  
  
Our main contribution is the description of an efficient algorithm (SC4ID, which stands for Sequence Covering for Intrusion Detection) based on the concept of a so-called \textit{optimal-covering} of a sequence by a series of  subsequences extracted from a predefined set of sequences. We demonstrate in this paper that this algorithm is efficient, at least in terms of accuracy and response time to isolate attack sequences that have never been observed (zero-day settings). Indeed, as its implementation is based on generalized suffix-trees (or suffix-arrays), it requires a relatively large memory overhead.

The second section of this paper briefly reports the related key works. We detail the SC4ID algorithm in the third section and evaluate it in the fourth section on two distinct sequences of system call benchmarks. On the smallest size benchmark we compare the proposed similarity measure with three baseline similarity measures commonly used in bioinformatics to estimate the similarity of symbolic sequences. We discuss our results in the fifth section before concluding this study in section six.

\section{Context and Related works}

In a broad sense we address anomaly detection in sequential data \cite{Chandola2009} while focusing on intrusion detection in cyber-physical systems. Intrusion \cite{Scarfone2007} refers to possible security breaches in (cyber-)systems, namely malicious activity or policy violations. It covers both intrusions \textit{per se}, i.e. attacks from the outside, and misuse, i.e. attacks from within the system. An intrusion detection system (IDS) is thus a device that monitors a system for detecting potential intrusions. The IDS will be referred to as  NIDS if the detection takes place on a network and HIDS if it takes place on a host of a network. Furthermore, we distinguish i) signature-based IDS approaches, that detects attacks by looking for predefined specific patterns, such as byte sequences in network packets, or known malicious sequences of instructions  used by malware, to ii) anomaly-based intrusion detection systems that were primarily introduced to detect unknown attacks (zero-day attacks).

In this work we exclusively address host intrusion detection system (HIDS) through semi-supervised anomaly-based approaches. Since Forrest's pioneering  work \cite{Forrest1996} most of HIDS (at least in the UNIX/LINUX sphere) use system call sequences
as their primary source of information. 
Generally, sequences of system calls  are represented as sequences of integer symbols, for which the order of occurrence of the symbol is of crucial importance. Numerous work and surveys have been published in the area of anomaly detection in the scope of intrusion detection, see \cite{Liao2013} \cite{Bhuyan2014}, \cite{Hodo2017} for recent studies. If we reduce the area of interest to anomaly detection in sequential data, four avenues for handling symbolic sequences are mainly followed:\\

\textbf{Window-based approaches} \cite{Hofmeyr1998} are quite popular since a fixed size window enables a wide range of statistical, knowledge-based and machine learning techniques to be applied in a straightforward manner. A fixed-size window is first defined, and then it progressively slides along the tested sequence. Each window (basically a fixed-size subsequence) is in general represented by a feature vector. Then, models such as the one class Support Vector Machine \cite{Wang2004}, Multi Layer Perceptron and Convolutional Neural Networks \cite{Yao2006} are used to provide a score for deciding whether an anomaly is present or not.

In this framework, an aggregation of each  window (that is sliding all along the sequence) score is necessary to get a global decision at the sequence level.  The prediction score is used to detect and locate the position of the anomalies inside the test sequence if any.\\


\textbf{Global kernel-based  approaches} \cite{Budalakoti2009} process each sequence as a whole and a pair-wise sequence kernel (string kernel) is used to provide the sequence space with a similarity measure. The $k$-Near-Neighbor rule or any of the so-called kernel machine methods can then be applied to model the 'normal' clusters and isolate the 'anomalies'. These approaches find their roots in text processing \cite{Levenshtein66} (Levenshtein's distance) or in Bioinformatics \cite{Smith1981, Coull2003} (Smith and Watermans),  \cite{Needleman1970} (Needleman-Wunsch), \cite{Kundu2010} (BLAST) and Longest Common Subsequence (LCS) or Longest Similar Subsequence \cite{Wan2006} \cite{Sureka2015}. Such methods do not seem to outperform window-based approaches and are in general much more costly in  term of algorithmic complexity than state of the art methods. \\

\textbf{Generative approaches}, essentially  Hidden Markov Models (HMM) \cite{Qiao2002} \cite{Hu2010} \cite{Jain2012}, Conditional random Fields (CRF) \cite{Gupta2007} \cite{Taub2013} or Recurrent Neural Networks (RNN, LSTM, etc.) \cite{Sheikhan2012} \cite{Kim2016} have been used with apparent success on various intrusion detection tasks, such as payload analysis or Network Layer Intrusion Detection or HIDS. However, the choice of parameters such as the order of the Markovian dependency, number of hidden variables, etc., is often the result of a compromise to avoid over-fitting, and long-term dependency is not necessarily easily modeled. \\

\textbf{Language-based approaches} have been proposed initially to extract very simple n-gram features to enhance a vector space model similar to the one used in text mining \cite{Rieck2007}, \cite{Orisaniya2015}. Recently,  a much ambitious model has been proposed that proposes to enact phrases and sentences, hence a language, from sequences of system calls ~\cite{Creech2014}. Nevertheless, these approaches suffer from the combinatorics explosion. When simple $n$-grams models are used (with $n$ lower than $5$ or $7$) the size of the vector space model is very high (several millions of dimension) and in general the lack of available data to train the model limits its accuracy. In the case of  Creech et al. approach \cite{Creech2014}, the combinatorics is much higher with an estimated feature space dimension of $10^{16}$ which makes this model intractable for common hardware.\\

The approach we present below relates to the (global) kernel-based family. Each sequence is thus considered as a whole, regardless of its length. All the specificity and novelty of the method relies on a quite simple similarity measure, that we call covering similarity. To our knowledge, this similarity measure has not been yet proposed for sequence comparison, specifically in the context of intrusion detection.  It is defined to evaluate how close a sequence is to a set of supposedly 'normal' sequences. A simple threshold is  used to decide whether an unknown sequence is 'normal' or should be considered as an anomaly. The main advantages of our approach are:
\begin{itemize}
\item apart from a decision threshold, it is parameter-free, in particular it does not rely on a window size,
\item it is incremental and can be set up 'online',
\item it is interpretable since it easily enables the location of abnormal areas in long sequences,
\item it supports a very efficient instance selection scheme to iteratively improve the 'normal' model, without overloading it with unnecessary instances,
\item it is quite efficient compared to other machine learning based models and scales well compared to classical sequence alignment kernels (which, in general, are  at least in $O(n^2)$ complexity) such as string kernels, the Longest Common Subsequence or the Smith and Waterman similarities, due to the suffix-tree and suffix-array data structures on which the implementation of our algorithm relies. 
\item Furthermore, it does not rely on pairwise distance calculation between sequences, but on the distance between a sequence and a set of reference sequences, which also drastically reduces its computational complexity.
\end{itemize}

\section{The SC4ID algorithm}
The overall principle of the SC4ID algorithm is straightforward. It is depicted in algorithm \ref{fig:SC4IDprinciple} and presented in Fig. \ref{alg:SC4ID}. Given a set of sequences considered as normal sequences, $S$, and a threshold $\sigma \in [0;1]$, SC4ID evaluates the similarity of an unknown sequence $s$ with the elements of $S$ according to a similarity measure $\mathscr{S}(s,S)$ that consists in optimally covering $s$ with subsequences of elements of $S$. SC4ID can be considered as an implicit semantic-based approach if we interpret the covering subsequences as phrases constructed from $S$, the corpus of training sequences. SC4ID also relates to editing distances or time elastic measures such as the Levenshtein's distance. But instead of addressing the matching problem through local editing operations (substitution, insertion, deletion), it evaluates the minimal number of subsequences that are required to build a complete covering of a given sequence. Furthermore, it allows for subsequences to be swapped, an operation that is very costly to implement in the editing distance framework. Most importantly, SC4ID  does not evaluate pair-wise sequence similarities to construct a whole similarity matrix as for the Levenshtein's distance, but directly computes the similarity between a sequence and a set of sequences which drastically reduces its algorithmic complexity. 

Behind this covering principle, we potentially manage to use any subsequence that can be drawn from the set $S$, without constraining neither the length of the subsequences nor their number. The covering is in itself informative and enables the location of anomalies spread inside a long sequence. Furthermore one can interpret the covering as a set of words that describes the sequence, seen as a phrase (as conceptually proposed in \cite{Creech2013}). Indeed the fact that a fast algorithm exists to evaluate S-optimal coverings makes it very relevant in the context of anomaly detection in sequential data.

Finally, as described in algorithm \ref{alg:SC4ID}, if the covering similarity is above the threshold $\sigma$ then the sequence $s$ is considered as normal, otherwise it will be considered as an anomaly. 

The only parameter in the SC4ID algorithm is $\sigma$. In a semi-supervised anomaly-based detection framework applied to intrusion detection, all methods provide a confidence measure (e.g a probability, a distance, a similarity measure, etc.) and the definition and setting up of an $\sigma$-like parameter in order to make the final decision. We discuss the tuning of this parameter for the SC4ID algorithm in subsection \ref{subsec:sigma}.

\begin{figure}
\center\includegraphics[scale=.5]{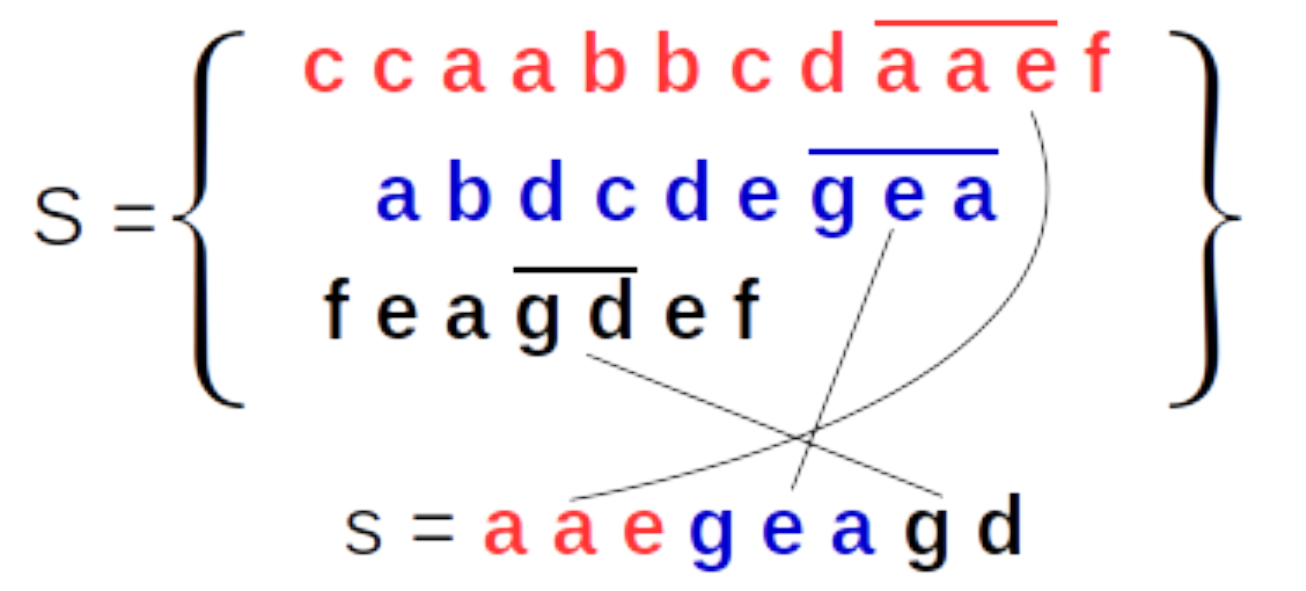}
\caption{Principle of SC4ID. In the example, subsequence $s$ is optimally covered by three subsequences drawn from sequences in $S$, $[aae, gea,gd]$, which is a $S$-optimal covering for $s$. }
\label{fig:SC4IDprinciple}
\end{figure}

The specificity and novelty of the algorithm lie in the way similarity $\mathscr{S}(s,S)$ is defined. 
We introduce hereinafter some definitions and notation to detail the formal definition of this similarity measure.

\subsection{Definitions and notation}

Let $\Sigma$ be a finite alphabet and let $\Sigma^*$ be the set of all sequences (or strings) defined over $\Sigma$. We note $\epsilon$ as the empty sequence.

Let $S \subset \Sigma^*$ be any set of sequences, and let $S_{sub}$ be the set of all subsequences that can be extracted from any element of $S\cup \Sigma$. We denote by $\mathscr{M}(S_{sub})$ the set of all the multisets\footnote{A multiset is a collection of elements in which elements are allowed to repeat; it may contain a finite number of indistinguishable copies of a particular element.} that we can compose from the elements of $S_{sub}$.

$c \in \mathscr{M}(S_{sub})$ is called a partial covering of sequence $s \in \Sigma^*$ if and only if 
\begin{enumerate}
\item all the subsequences of $c$ are also subsequences of $s$,
\item indistinguishable copies of a particular element in $c$ correspond to distinct occurrences of the same subsequence in $s$.
\end{enumerate}

If $c \in \mathscr{M}(S_{sub})$ entirely covers $s$, meaning that we can find an arrangement of the elements of $c$ that covers $s$ entirely, then we will call it a full covering for $s$.


Finally, we call a $S$-optimal covering of $s$ any full covering of $s$ which is composed with a minimal number of subsequences in $S_{sub}$.

Let $c^*_S(s)$ be a $S$-optimal covering of  $s$.

We define the covering similarity measure between any non-empty sequence $s$ and any set $S \subset \Sigma^*$ as
\begin{equation}
\mathscr{S}(s,S) = \frac{|s|-|c^*_S(s)|+1}{|s|} 
\label{eq:coveringSimilarity}
\end{equation}
where $|c^*_S(s)|$ is the number of subsequences composing a $S$-optimal covering of $s$, and $|s|$ is the length of sequence $s$.\\

Note that in general $c^*_S(s)$  is not unique, but since all such coverings have the same cardinality, $|c^*_S(s)|$, $\mathscr{S}(s,S)$ is well defined. 


Properties of $\mathscr{S}(s,S)$:
\begin{enumerate}
\item If $s$ is a non-empty subsequence in $S_{sub}$, then  $\mathscr{S}(s,S) = 1$ is maximal.
\item In the worst case, the $S$-optimal covering of $s$ has cardinality equal to $|s|$, meaning that it is composed only with subsequences of length $1$. In that case, $\mathscr{S}(s,S) = \frac{1}{|s|}$ is minimal.
\item If $s$ is non-empty, $\mathscr{S}(s,\emptyset) = \frac{1}{|s|}$ (note that if $S=\emptyset$, $S_{sub}=\Sigma$).
\end{enumerate} 

Furthermore, as $\epsilon$ is a subsequence of any sequence in $\Sigma^*$, we define, for any set $S \subset \Sigma^*$, $\mathscr{S}(\epsilon,S) = 1.0$ 

Note that the covering similarity between a sequence and a set of sequences as defined in Eq. \ref{eq:coveringSimilarity} enables the definition of a covering similarity measure on the sequence set itself. For any pair of sequences $s_1$, $s_2$ this measure is defined as follows: 
\begin{equation}
\mathscr{S}_{seq}(s_1,s_2) = \frac{1}{2} (\mathscr{S}(s_1,\{s_2\})+ \mathscr{S}(s_2,\{s_1\}))  
\label{eq:SeqCoveringSimilarity}
\end{equation}

where $\mathscr{S}$ is defined in Eq. \ref{eq:coveringSimilarity}.

\begin{algorithm}
\SetKwData{Left}{left}\SetKwData{This}{this}\SetKwData{Up}{up}
\SetKwFunction{Union}{Union}\SetKwFunction{FindCompress}{FindCompress}
\SetKwInOut{Input}{input}\SetKwInOut{Output}{output}
\Input{$S \subset \Sigma^*$, a set of sequences}
\Input{$s \in \Sigma^*$, a test sequence }
\Input{$\sigma \in [0,1]$, a threshold value }
\Output{a decision value: 'normal' or 'anomaly'}
\BlankLine
Provide a $S$-optimal covering of $s$\;
Evaluates $\mathscr{S}(s,S)$ according to Eq. \ref{eq:coveringSimilarity}\;
\lIf{$\mathscr{S}(s,S) \ge \sigma$}{\Return ['normal', $\mathscr{S}(s,S)$ ]}
\lElse{\Return ['anomaly', $\mathscr{S}(s,S)$ ]}
\BlankLine
\caption{SC4ID\label{alg:SC4ID}}
\end{algorithm}

As an example, let us consider the following case:

\begin{tabular}{ll} 
&$s_1$ = [0,0,0,0,1,1,1,1,0,0,0,0,1,1,1,1] \\
&$s_2$ = [0,0,0,0,0,0,0,0,1,1,1,1,1,1,1,1]\\
&$S= \{s_1, s_2\}$\\
&$s_3$ = [0,0,1,1,0,0,1,1,0,0,1,1,0,0,1,1]\\
&$s_4$ = [0,1,0,1,0,1,0,1,0,1,0,1,0,1,0,1]\\
\end{tabular} 

The $S$-optimal covering of $s_3$ \footnote{([0,0,1,1][0,0,1,1],[0,0,1,1][0,0,1,1]) is a $S$-optimal covering of $s_3$} is  size $4$, hence $\mathscr{S}(s_3,S) = \frac{16-4+1}{16}=13/16$, and the $S$-optimal covering of $s_4$ \footnote{([0,1],[0,1],[0,1],[0,1],[0,1],[0,1],[0,1],[0,1]) is a  $S$-optimal covering for $s_4$} is size $8$, leading to $\mathscr{S}(s_4,S) = \frac{16-8+1}{16}=9/16$.\\

The main challenge for the SC4ID algorithm is to evaluate $\mathscr{S}(s,S)$ efficiently for a sufficiently  large $S$ and relatively long sequences $s$ such as to be able to process common sequences of system calls. This essentially requires an efficient way to get $S$-optimal coverings for tuples $(s,S)$ constructed from general sequences of system calls.

\subsection{Finding a $S$-optimal covering for any tuple $(s,S)$}

The brute-force approach to find a $S$-optimal covering for a sequence $s$ is presented in algorithm \ref{alg:Find-S-optimal}. It is an incremental algorithm that, first, finds the longest subsequence of $s$ that is contained in $S_{sub}$ and that starts at the beginning of $s$. This first subsequence is the first element of the $S$-optimal covering. Then, it searches for the next longest subsequence that is in $S_{sub}$ and which starts at the end of the first element of the covering, then adds it to the covering under construction, and it is then iterated until reaching the end of sequence $s$.  

\begin{algorithm}
\SetKwData{Left}{left}\SetKwData{This}{this}\SetKwData{Up}{up}
\SetKwFunction{Union}{Union}\SetKwFunction{FindCompress}{FindCompress}
\SetKwInOut{Input}{input}\SetKwInOut{Output}{output}
\Input{$S \subset \Sigma^*$, a set of sequences}
\Input{$s \in \Sigma^*$, a test sequence }
\Output{$c$, a ($S$-optimal) covering for $s$}
\BlankLine
$continue \longleftarrow True$\;
$start \longleftarrow 0$\;
$c^* \longleftarrow \emptyset$\;
\While{continue}{
  $end \longleftarrow start + 1$\;
  \While{$end<|s|$ and $s[start:end] \in S_{sub}$}{
  $end \longleftarrow end + 1$\;
  }
  $c \longleftarrow c^* \cup \{s[start:end-1]\}$\;
  \lIf{$end = |s|$}{$continue \longleftarrow False$}
  $start \longleftarrow end $\;
}
\Return $c$\;
\BlankLine
\caption{Find a $S$-optimal covering for $s$\label{alg:Find-S-optimal}}
\end{algorithm}

\begin{proposition} Algorithm \ref{alg:Find-S-optimal} outputs a $S$-optimal covering for sequence $s$.
\end{proposition}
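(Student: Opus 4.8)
The plan is to recast the statement as a claim about minimum-cardinality factorizations and then settle it with a ``greedy stays ahead'' exchange argument. First I would observe that a full covering of $s$ is the same data as a factorization $s = u_1 u_2 \cdots u_k$ into consecutive blocks with each $u_j \in S_{sub}$: any arrangement of the multiset $c$ that reconstructs $s$ is exactly such a factorization read left to right, so minimizing $|c|$ is identical to minimizing the number of blocks $k$ in a factorization of $s$ by elements of $S_{sub}$ (the factors, i.e.\ contiguous substrings, of the sequences in $S \cup \Sigma$; this is the reading forced by the example on $s_3$ and by the suffix-tree implementation). I would also record the two facts that make the procedure well posed: every single symbol lies in $S_{sub}$ because $\Sigma \subseteq S \cup \Sigma$, so the inner loop can always advance by at least one position, the trivial factorization into single symbols is a full covering, and hence a minimum $m$ exists, Algorithm~\ref{alg:Find-S-optimal} terminates, and its output is a genuine full covering whose blocks concatenate to $s$.

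The crucial structural property I would isolate next is that $S_{sub}$ is closed under taking factors: any contiguous substring of an element of $S_{sub}$ is again a substring of the same element of $S \cup \Sigma$, hence again in $S_{sub}$. In particular $S_{sub}$ is closed under taking prefixes and suffixes of its elements. This is precisely the hereditary property that legitimizes the greedy choice made by the inner while loop, namely extending the current block to the longest substring $s[\mathrm{start}:\mathrm{end}]$ that still belongs to $S_{sub}$.

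With these ingredients the heart of the proof is an induction showing that the greedy cut points stay ahead of those of any optimal covering. Writing the greedy factorization via its cut points $0 = g_0 < g_1 < \cdots < g_k = |s|$ and fixing an arbitrary $S$-optimal factorization $0 = b_0 < b_1 < \cdots < b_m = |s|$, I would prove by induction on $i$ that $g_i \ge b_i$. The base case $g_0 = b_0 = 0$ is immediate. For the step, assuming $g_i \ge b_i$: if $g_i \ge b_{i+1}$ then $g_{i+1} > g_i \ge b_{i+1}$; otherwise $b_i \le g_i < b_{i+1}$, and since $s[b_i:b_{i+1}] \in S_{sub}$ its suffix $s[g_i:b_{i+1}]$ is a factor of an element of $S_{sub}$, hence lies in $S_{sub}$ by the closure property, so the greedy extension from $g_i$ reaches at least $b_{i+1}$, i.e.\ $g_{i+1} \ge b_{i+1}$. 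Taking $i=m$ gives $g_m \ge b_m = |s|$, so the greedy process has already exhausted $s$ within $m$ blocks, whence $k \le m$; since the greedy output is itself a full covering we also have $k \ge m$, so $k = m$ and the returned covering is $S$-optimal.

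The main obstacle I anticipate is purely bookkeeping rather than conceptual: one must pin down the half-open indexing convention of the slices $s[\mathrm{start}:\mathrm{end}]$ and handle the degenerate cases where a greedy or optimal block already reaches $|s|$, so that the inequality $g_i \ge b_i$ and its use in the inductive step are stated consistently at the boundary. The single genuinely substantive point is the suffix-closure of $S_{sub}$, which is what guarantees that truncating the optimal block on its left keeps it admissible, and hence that the greedy ``longest-first'' choice is never regretted.
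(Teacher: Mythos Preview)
Your argument is correct, and it differs from the paper's in organization rather than in spirit. The paper proceeds by induction on the size $n$ of an $S$-optimal covering: it peels off the first greedy block $s_1$ and the first optimal block $s_1^*$, argues that $s_1^*$ must be a prefix of $s_1$ (since greedy takes the longest admissible prefix), deduces that the greedy remainder $\overline{s}_1$ is a factor of the optimal remainder $\overline{s}_1^*$, and then invokes a monotonicity observation ($|c^*_S(u)| \le |c^*_S(v)|$ whenever $u$ is a factor of $v$) together with the inductive hypothesis on $\overline{s}_1$. Your version is the textbook ``greedy stays ahead'' formulation on cut points $g_i \ge b_i$, and you make explicit the structural fact that drives both arguments, namely that $S_{sub}$ is closed under taking factors (hence suffixes). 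The paper uses this closure only implicitly, through its unproved monotonicity observation~(ii); your isolation of it is cleaner and makes the proof fully self-contained. Conversely, the paper's recursive decomposition is perhaps more directly tied to how the algorithm is written and avoids the mild boundary bookkeeping you flag at the end (what happens when $g_i$ reaches $|s|$ before $i=m$). Either route works; yours is the more standard presentation of greedy optimality for interval/prefix-closed structures.
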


\begin{proof} i) First we observe that since all the subsequences of length $1$ constructed on $\Sigma$ are included into $S_{sub}$, algorithm \ref{alg:Find-S-optimal}, by construction, outputs a full covering of $s$ (meaning that $s$ is entirely covered by the subsequences of the covering provided by the algorithm).  

ii) Second we observe that, for all $s_1$ and $s_2$ in $\Sigma^*$ such that $s_1$ is a subsequence of $s_2$, and any $S \subset \Sigma^*$, $|c^*_S(s_1)| \le |c^*_S(s_2)|$.

We finalize the proof by induction on $n$, the cardinality (the size) of the coverings. 

The proposition is obviously true for $n=1$: for all sequence $s$ for which a covering of size $1$ exists (meaning that $s$ is a subsequence of one of the sequences  in $S$), algorithm \ref{alg:Find-S-optimal} finds the S-optimal covering that consists of $s$ itself.

Then, assuming that the proposition holds for  $n$, such that $n \ge 1$ (IH),  we consider a sequence $s$ that admits a $S$-optimal covering of size $n+1$.

Let $s=s_1+\overline{s}_1$, be the decomposition of $s$ according to the full covering provided by algorithm \ref{alg:Find-S-optimal}, where $s_1$ is the prefix of the covering (first element) and $\overline{s}_1$ the remaining suffix subsequence (concatenation of the remaining covering elements). $+$ is the sequence concatenation operator. Similarly, Let $s=s^*_1+\overline{s}^*_1$, be the decomposition of $s$ according to a $S$-optimal covering of $s$. Essentially, $s^*_1$, which is also a prefix of $s$, is a subsequence of $s_1$ (otherwise, since $s^*_1$ is in $S_{sub}$, algorithm \ref{alg:Find-S-optimal} would have increased the length of $s_1$ at least to the length of $s^*_1$). Hence, $\overline{s}_1$ is a subsequence of $\overline{s}^*_1$ and, according to ii), $|c^*_S(\overline{s}_1)| \le |c^*_S(\overline{s}^*_1)|=n$. This shows that $\overline{s}_1$ is a sequence that admits a $S$-optimal covering, $c^*_S(\overline{s}_1)$, of a size at most equal to $n$. According to (HI), algorithm \ref{alg:Find-S-optimal} returns such an optimal covering for  $\overline{s}_1$. This shows that the covering $\{s_1\} \cup c^*_S(\overline{s}_1)$ that is returned by algorithm \ref{alg:Find-S-optimal} for the full sequence $s$, is at most size $n+1$, meaning that it is actually a  $S$-optimal covering for $s$ of size $n+1$. Hence, by induction, the proposition is true for all $n$, which proves the proposition.
\end{proof}

\SetKwProg{Fn}{Function}{}{}

\begin{algorithm}
\Fn{breakDichoSearch($s$, $t_b$, $t_e$, $S$)}{
\SetKwData{Left}{left}\SetKwData{This}{this}\SetKwData{Up}{up}
\SetKwFunction{Union}{Union}\SetKwFunction{FindCompress}{FindCompress}
\SetKwInOut{Input}{input}\SetKwInOut{Output}{output}
\Input{$s \in \Sigma^*$, a test sequence }
\Input{$t_b < t_e < |s|$, the index segment in which looking for the break}
\Input{$S \subset \Sigma^*$, a set of sequences}
\Output{$t$, the searched breaking index position}
\BlankLine
$t \longleftarrow \lfloor(t_b + t_e)/2 \rfloor$\;
\eIf{$t=t_b$ and $s[t_b:t_e] \in S_{sub}$}{\Return t+1}{\Return t}
\eIf{$s[t_b:t] \in S_{sub}$}{\Return breakDichoSearch($s$, $t$, $t_e$, $S$)\;}
	{\Return breakDichoSearch($s$, $t_b$, $t$, $S$)\;}
\BlankLine
\caption{Find the first break location in $s$ between positions $t_b$ and $t_e$\label{alg:breakDichotomicSearch}}}
\end{algorithm}

\begin{algorithm}
\SetKwData{Left}{left}\SetKwData{This}{this}\SetKwData{Up}{up}
\SetKwFunction{Union}{Union}\SetKwFunction{FindCompress}{FindCompress}
\SetKwInOut{Input}{input}\SetKwInOut{Output}{output}
\Input{$S \subset \Sigma^*$, a set of sequences}
\Input{$s \in \Sigma^*$, a test sequence }
\Output{$c^*$, a $S$-optimal covering for $s$}
\BlankLine
$continue \longleftarrow True$\;
$start \longleftarrow 0$\;
$c^* \longleftarrow \emptyset$\;
\While{continue}{
  $t \longleftarrow$ breakDichoSearch($s$, start, $|s|$, $S_{sub}$)\;
  $c^* \longleftarrow c^* \cup \{s[start:t-1]\}$\;
  \lIf{$t = |s|$}{$continue \longleftarrow False$}
  $start \longleftarrow t $\;
}
\Return $c^*$\;
\BlankLine
\caption{Find using a binary search a $S$-optimal covering for $s$\label{alg:FindDicho-S-optimal}}
\end{algorithm}

\subsection{Algorithmic complexity and implementation considerations}

Algorithm \ref{alg:Find-S-optimal} requires a fast way to test the existence of a subsequence in the sequences of $S$. Similarly to many algorithms in the field of information indexing and retrieval, these algorithms face two difficulties when the size of the data (the size of $S$ and the average length of the sequences) increases, namely memory consumption and response time. These two aspects cannot be solved simultaneously and require a compromise to be found. We discuss below some potential implementations based on hashtable, suffix-tree or suffix-array.

\subsubsection{Hashtable implementation for the search of a subsequence}
If response time is ideal, and memory space is not a problem, then one can implement a hashtable that stores all the subsequences of $S$ (more precisely the elements of $S_{sub}$). By doing so, we will be able to know if a subsequence is a member of $S_{sub}$ in $O(1)$ time complexity. Hence, the time complexity to find the $S$-optimal covering for a sequence $s$ of average size $n$ will be $O(n)$ for this implementation. 

On the other hand, if the average length of the sequences in $S$ is $n$, then, the space required to store all the elements of $S_{sub}$ is expressed in $O(n^2\cdot |S|)$. 

This could be feasible for small size problems. However, for long sequences, e.g. with an average length of $10^5$ elements, and large $S$, e.g. $10^6$ sequences, we need to store $O(10^{16})$ subsequences in the hashtable which is not feasible in practice on common hardware.

\subsubsection{Suffix-tree and suffix-array implementations for the search of a subsequence}
For medium to large size problems, we need to drastically limit the space consumption. 

In comparison to a hashtable implementation, a generalized suffix-tree  implementation \cite{Bieganski1994} \cite{Ukkonen1995} would reduce the memory requirement to $O(n \cdot |S|)$, although, in general,  with a large proportionality constant (typically $10$ to $100$ in practice), and a slight increase of the computational complexity for searching for a subsequence ($O(m)$, where $m$ is the length of the subsequence that is searched). 

In comparison, suffix-arrays \cite{Manber1990}, and specifically its enhanced implementation \cite{Abouelhoda2004}, is a space-efficient data-structure that reduces the memory consumption without losing (too much) on the response time. As they are cache friendly, a suffix-array can in practice enable handling much larger sequence sets than a suffix-tree and is much easier to parallelize. A suffix-array provides the search for a subsequence with $O(m + log(n\cdot |S|))$ average time complexity (where $n$ is the average length of the sequence in $S$).

\subsubsection{Overall time complexity}
In our current implementation, we gave priority to speed rather than memory consumption, while trying to run our algorithm on common hardware. Thus we adopted a generalized suffix-tree implementation in Python\footnote{Python implementation of Suffix Trees and Generalized Suffix Trees, https://github.com/ptrus/suffix-trees}  that we have modified to cope with sets of sequences of integers (each integer corresponding to a system call) instead of sets of strings. The sequence datasets, described below, that we have used for our experiment fit easily in a generalized suffix-tree. Hence, using a generalized suffix-tree implementation, the search for a subsequence of size $n$ is $O(n)$.

The main computing effort for algorithm \ref{alg:Find-S-optimal} is located in the second part of the test (at line 6), which consists of checking whether the subsequence $s[start:end]$ belongs to the set of subsequences $S_{sub}$ associated to $S$. As we have to iterate along the sequence $s$ to successively find the elements of its covering, algorithm \ref{alg:Find-S-optimal} would require searching $O(|s|)$ subsequences, leading to an $O(|s|^2)$ upper bound for the total time complexity. 

Indeed, a straightforward improvement of the brute-force algorithm can be achieved.  Instead of iterating along the sequence $s$ to find successively the elements of its covering,  this improvement implements a dichotomic (or binary) search to locate the extremities, that we call breaks, of the subsequences that compose the covering. This improvement is described in algorithms \ref{alg:breakDichotomicSearch} and \ref{alg:FindDicho-S-optimal}. 

The improved search algorithm has a computational complexity that is upper bounded by $O(k\cdot |s| \cdot log(|s|))$, where $k=|c^*_S({s})|$ is the size of a $S$-optimal covering for $s$. Note that, if the size, $k$, of the covering is of the same order of magnitude as the length of $s$, then the 'improved' algorithm would, in fact, require more time than the brute-force one. Hence the improvement is only achieved when the size of the covering is significantly smaller than the length of the covered subsequence, which is the case in general, except for \textit{abnormal} sequences that we are aiming to isolate. Such sequences are expected to be rare, and, in principle, we could accept the extra computing cost if we effectively manage to separate them from the flow of numerous \textit{normal} sequences that need to be processed.

In addition the previous time complexities for Algorithms \ref{alg:Find-S-optimal} and \ref{alg:SC4ID} do not depend on $|S|$, which means that increasing the size of $S$ will not impact on the processing time. This property is particularly important for applications for which $|S|$ is potentially large such as in anomaly-based intrusion detection for instance.


\begin{figure*}
\begin{tabular}{|ccc|}
      \hline
      \includegraphics[width=60mm, height=60mm]{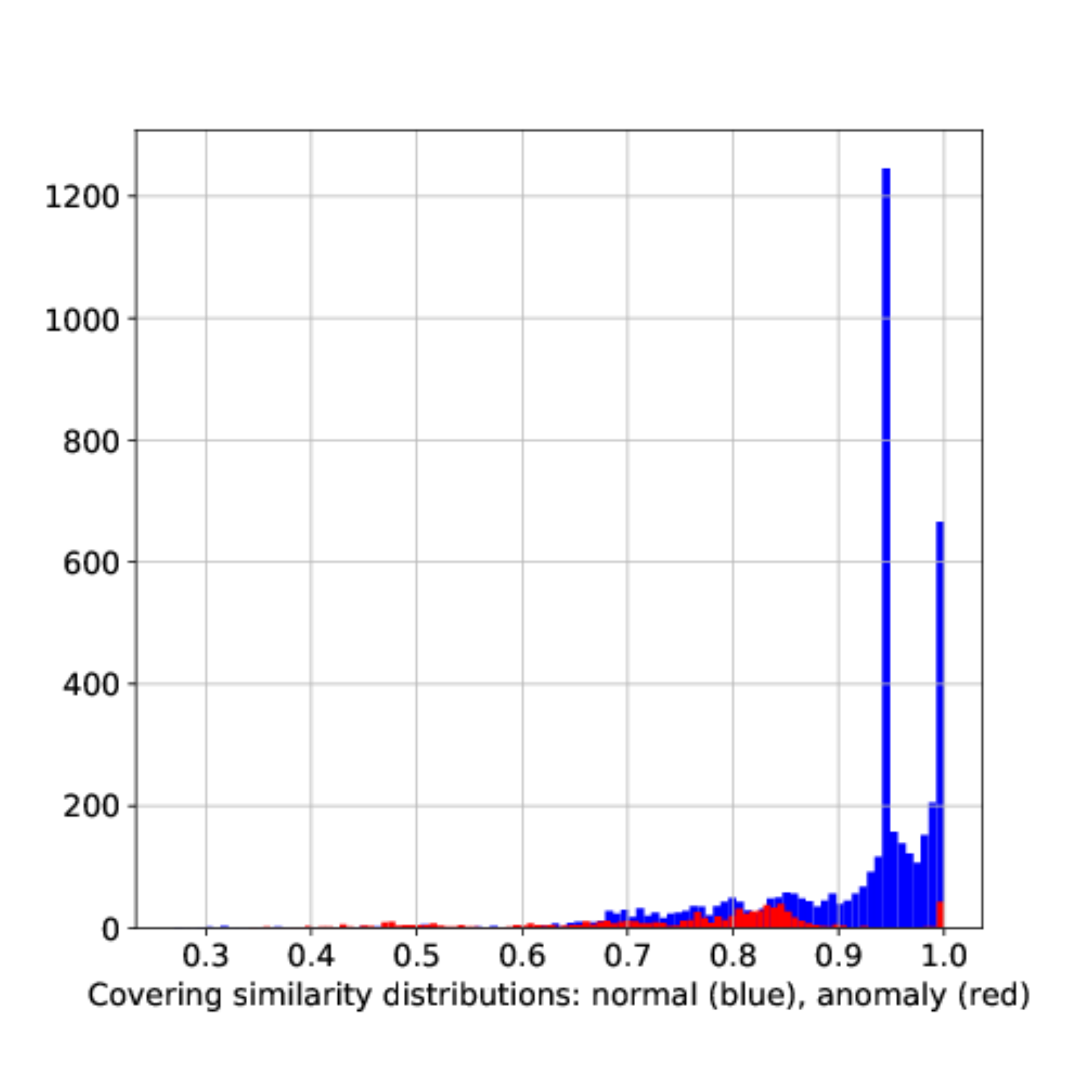} &
      \includegraphics[width=60mm, height=60mm]{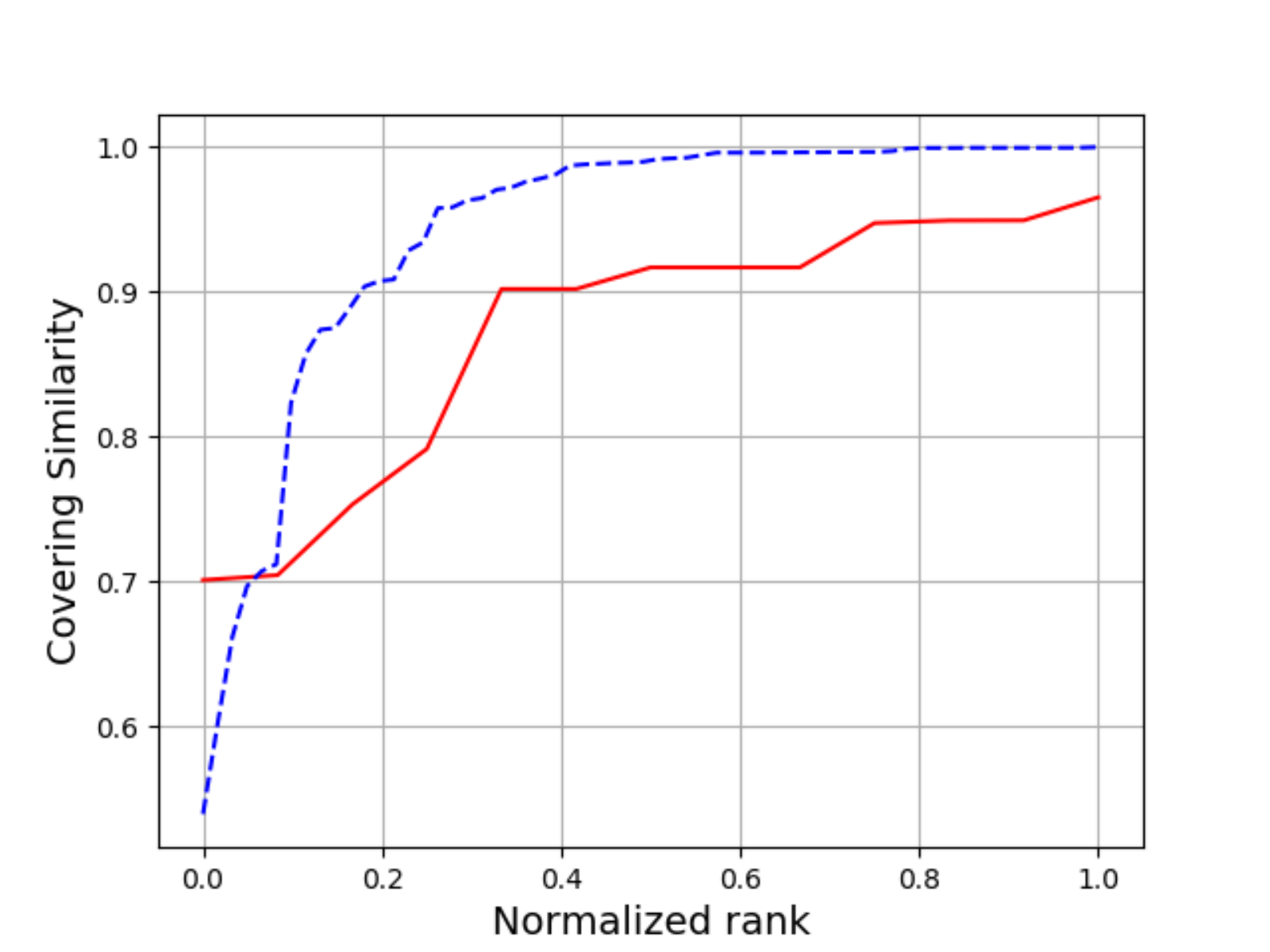} &
      \includegraphics[width=60mm, height=60mm]{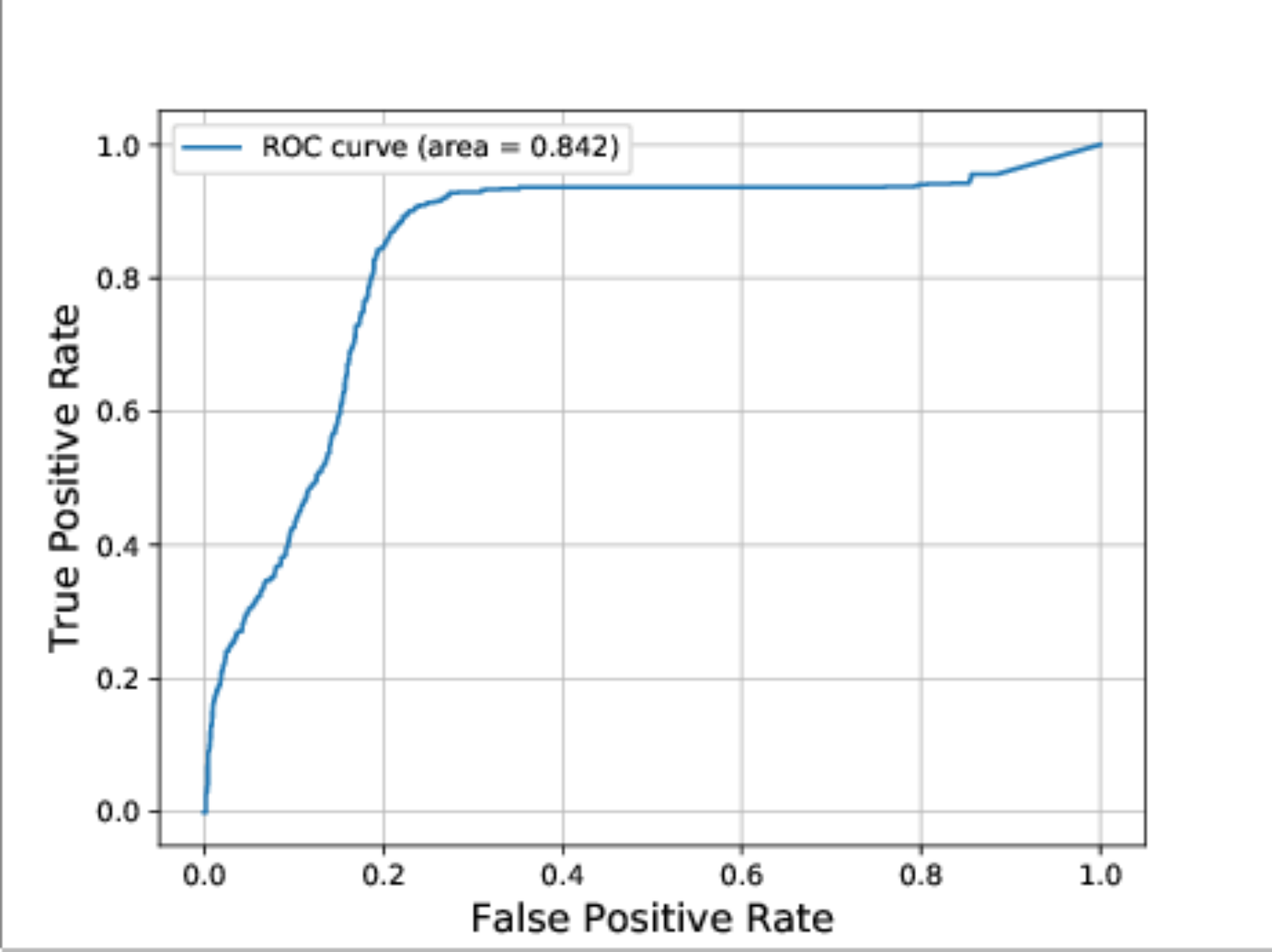} \\
      \hline
      \includegraphics[width=60mm, height=60mm]{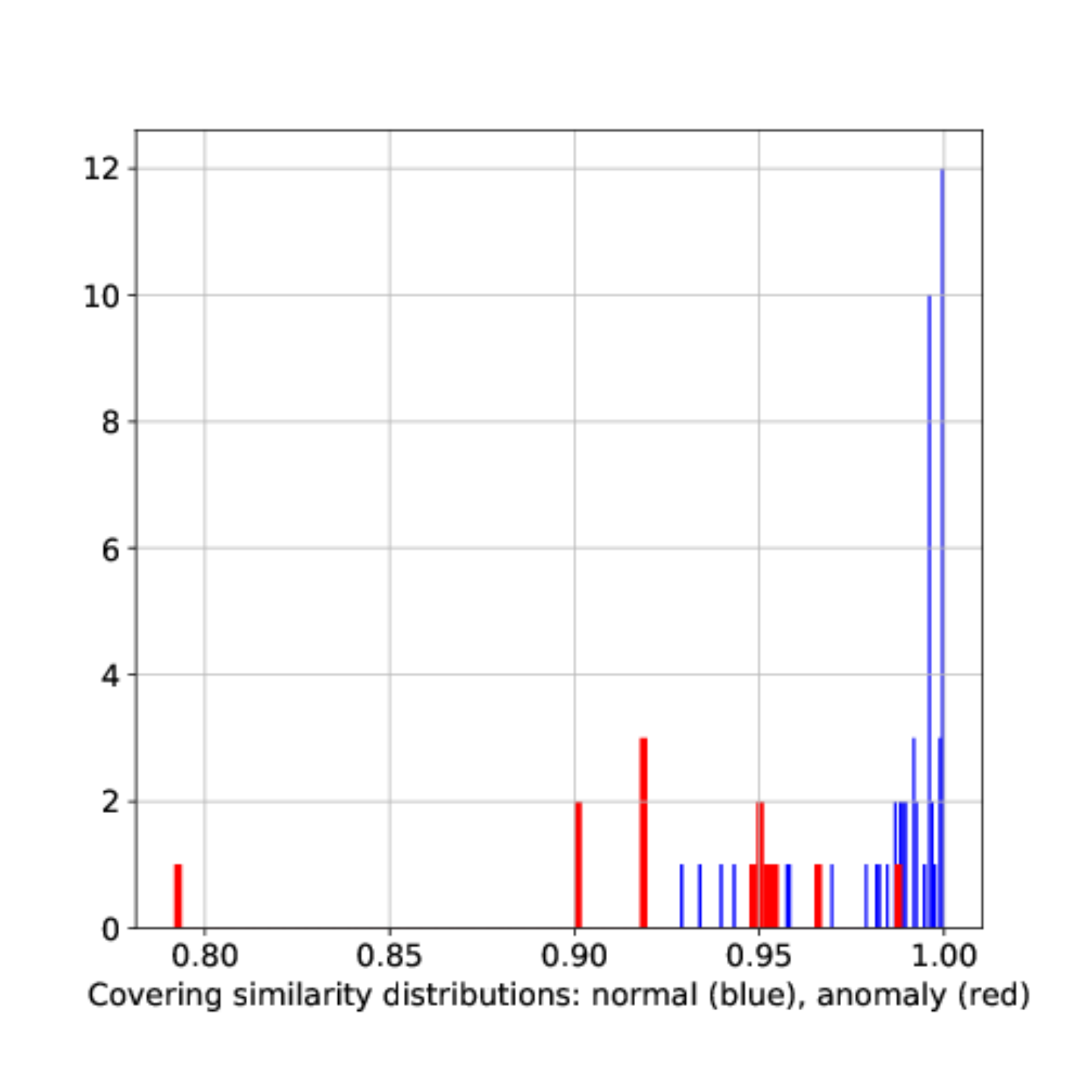} &
      \includegraphics[width=60mm, height=60mm]{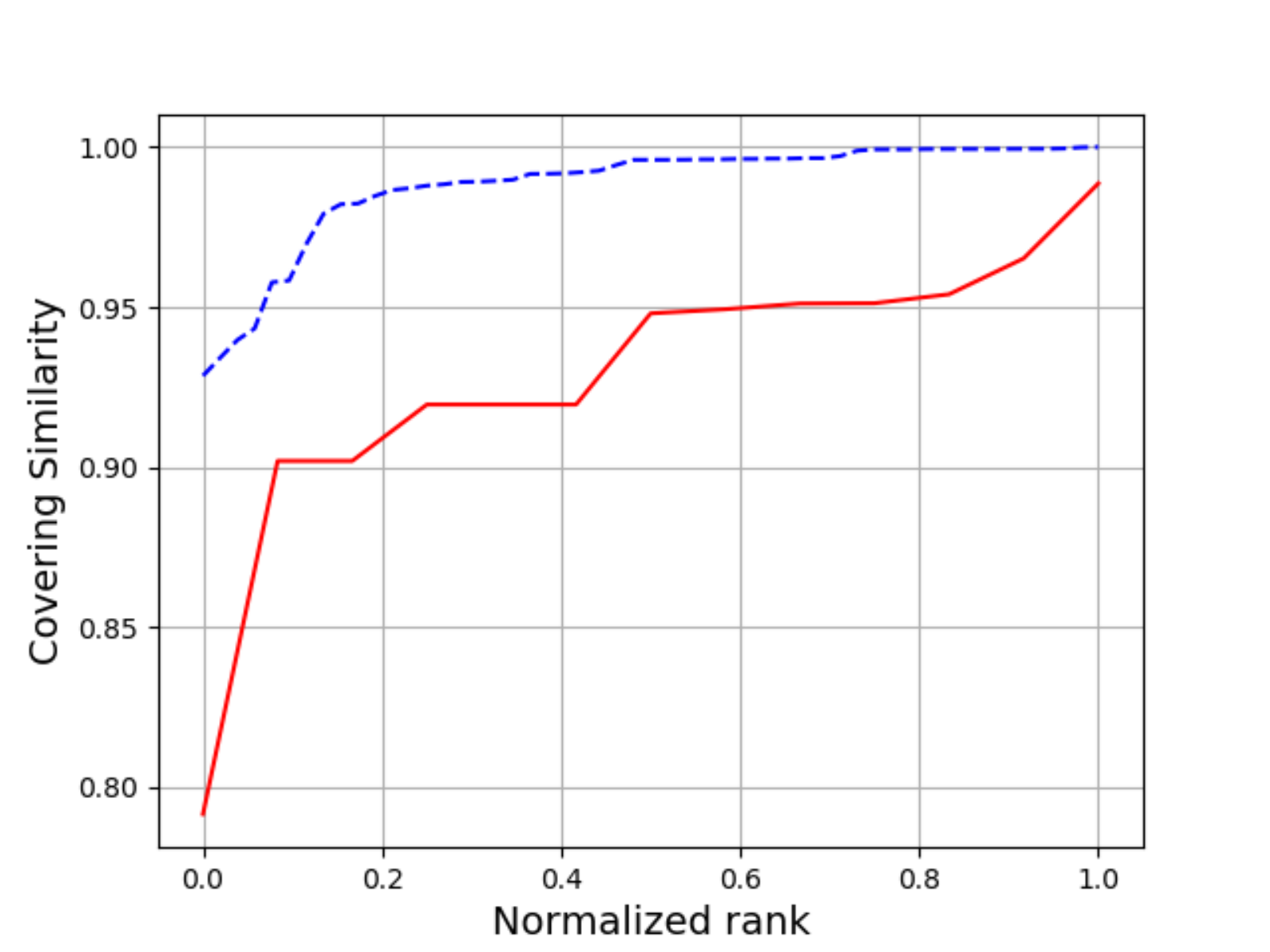} &
      \includegraphics[width=60mm, height=60mm]{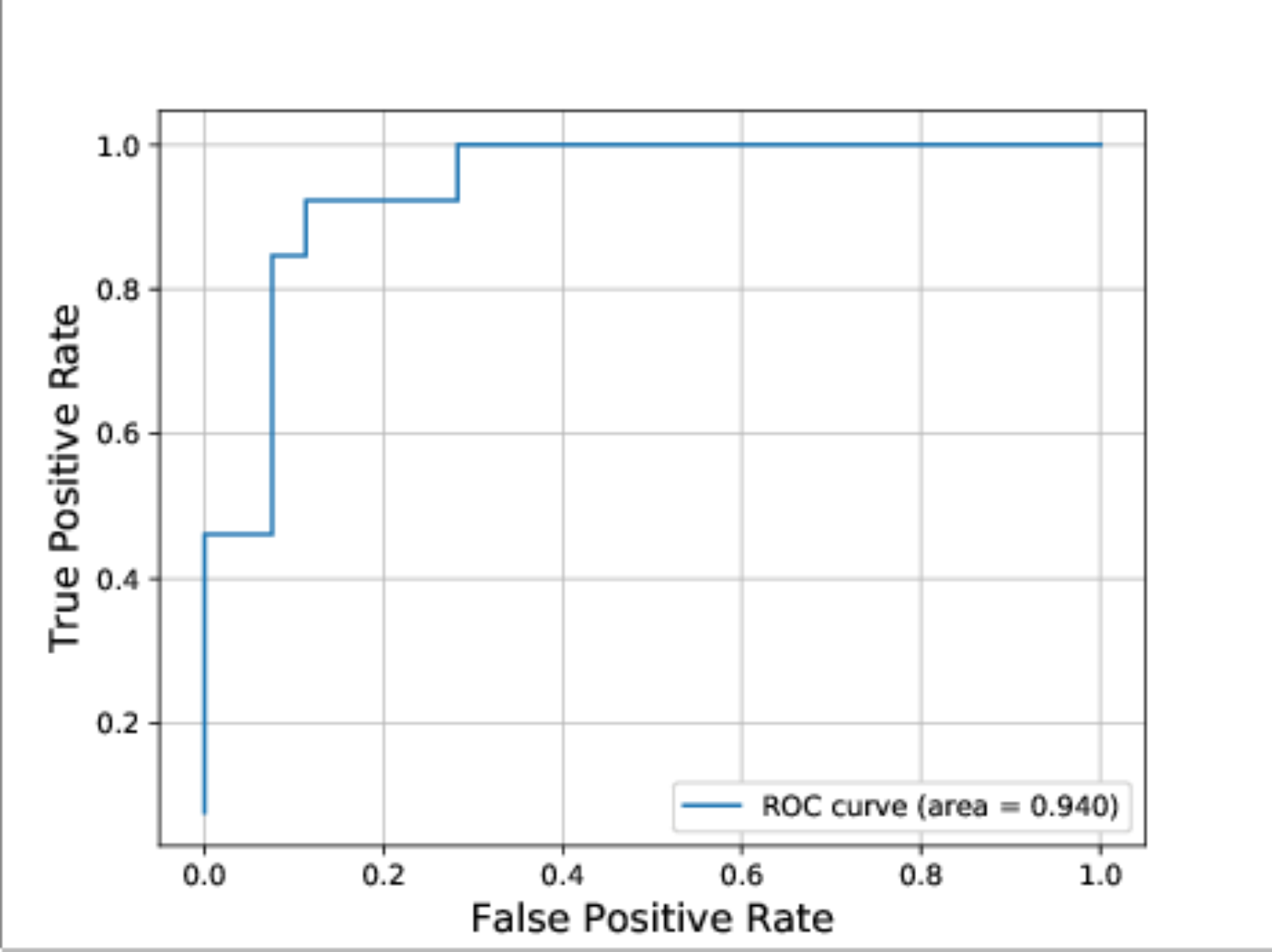} \\
      \hline
      \includegraphics[width=60mm, height=60mm]{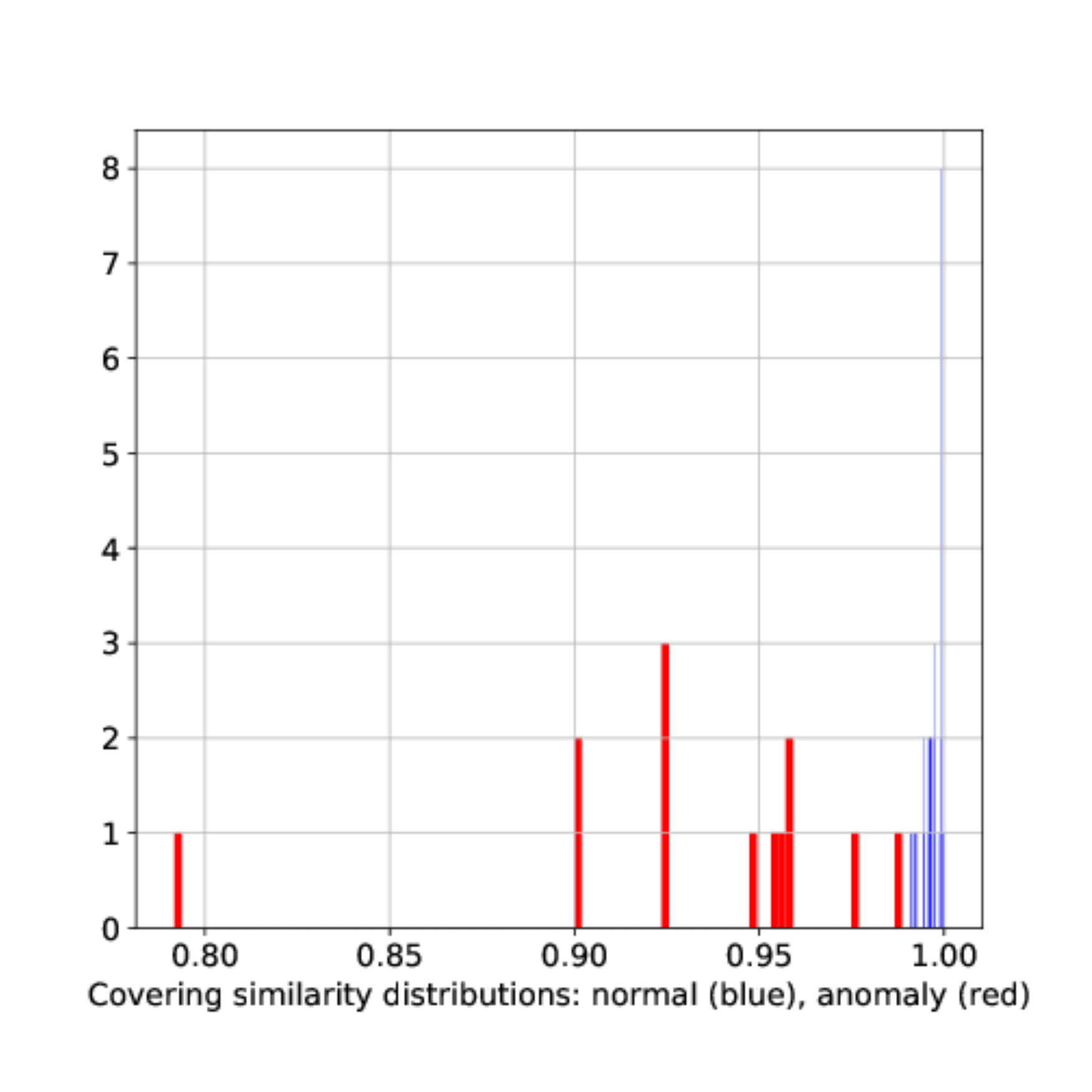} &
      \includegraphics[width=60mm, height=60mm]{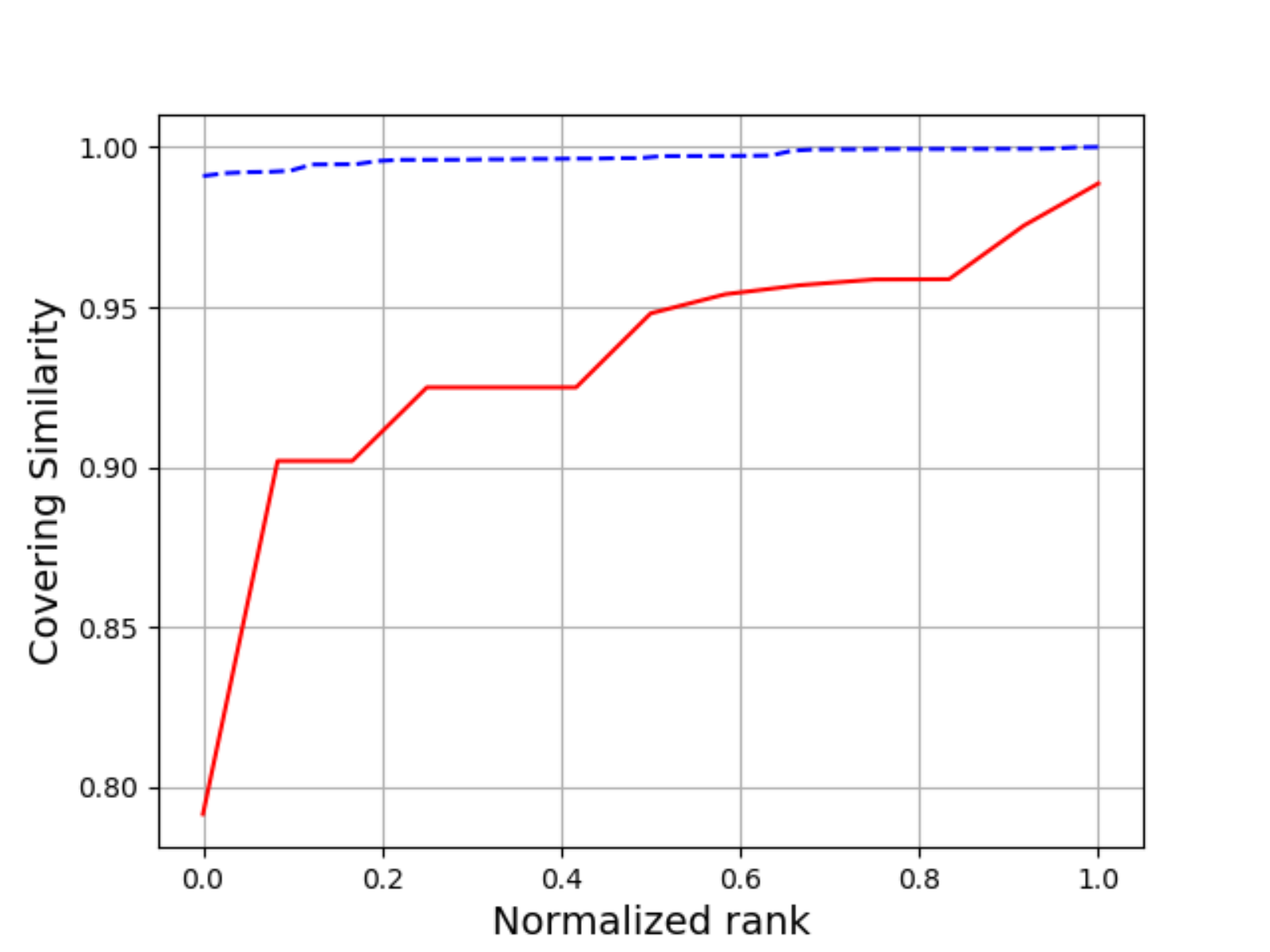} &
      \includegraphics[width=60mm, height=60mm]{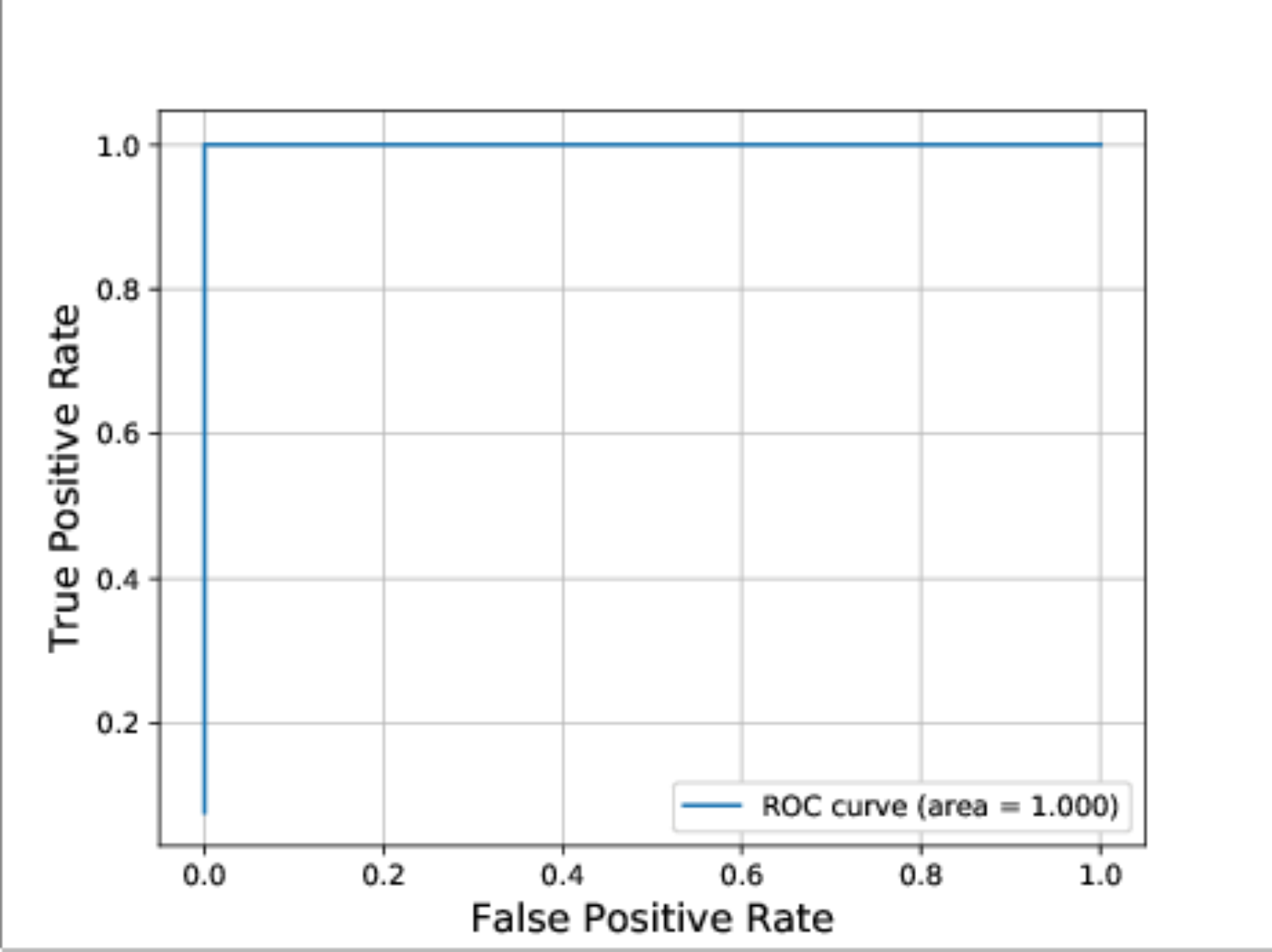} \\
      \hline
\end{tabular}
\caption{UNM dataset: histogram of the covering similarity distributions $\mathscr{S}_c(s,S)$ (left column),  ranked covering similarity $\mathscr{S}_c(s,S)$ (middle column), ROC curves (right column), when 10\%  (top row),  22\% (middle row) and 38\% (bottom row) of the normal data is used for training.}
\label{fig:umn_distrib_roc}
\end{figure*}

\begin{figure}
\includegraphics[scale=.5]{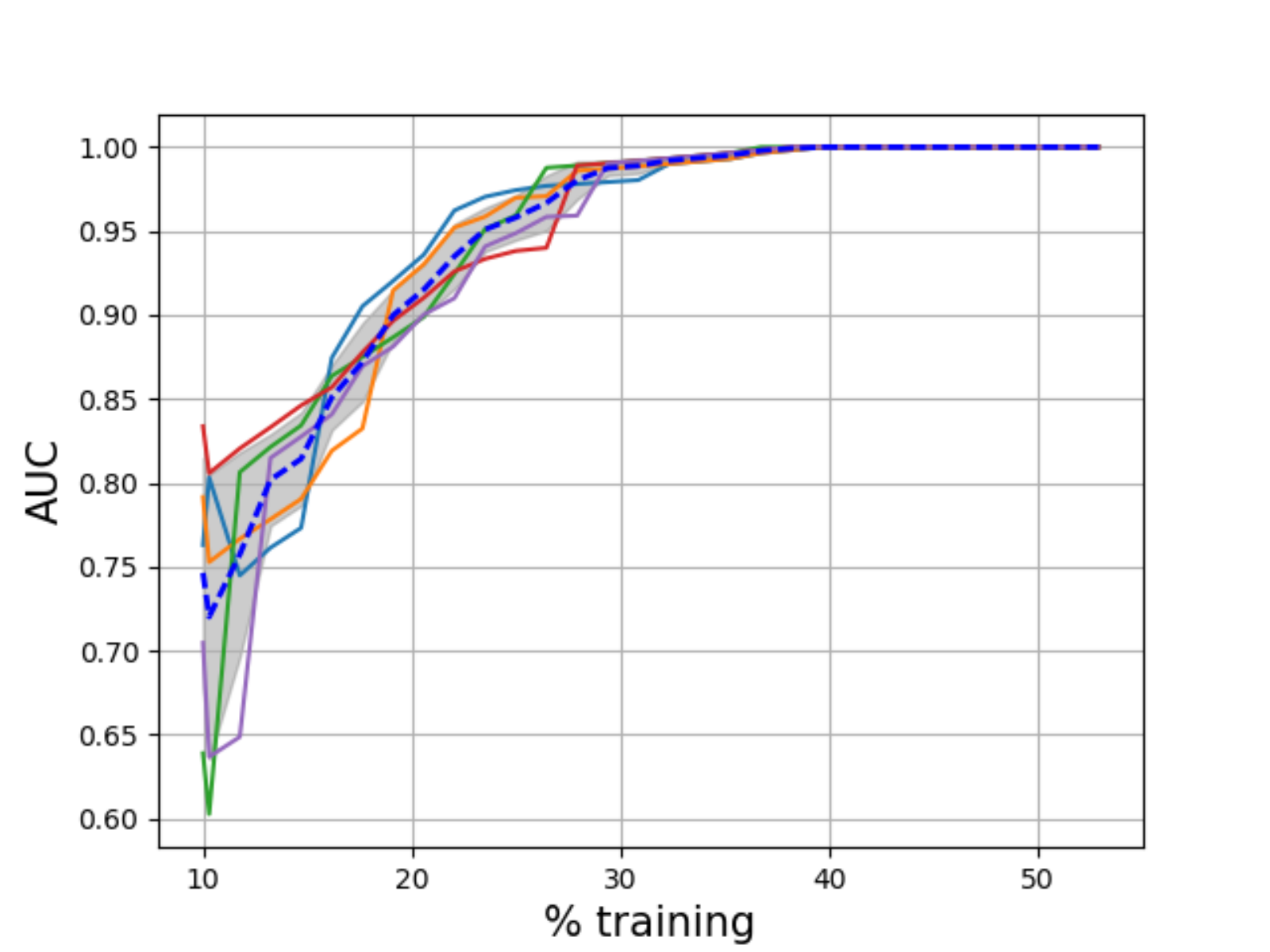}
\caption{UNM dataset: Area Under the ROC Curve (AUC) curves as a function of the size of the training set as a percentage for 5 distinct runs. The average curve is shown as a dotted line. The grey filled area shows the +1/-1 standard deviation curves.}
\label{fig:UNM_Auc-vs-trainSize}
\end{figure}

\begin{figure}
\includegraphics[scale=.5]{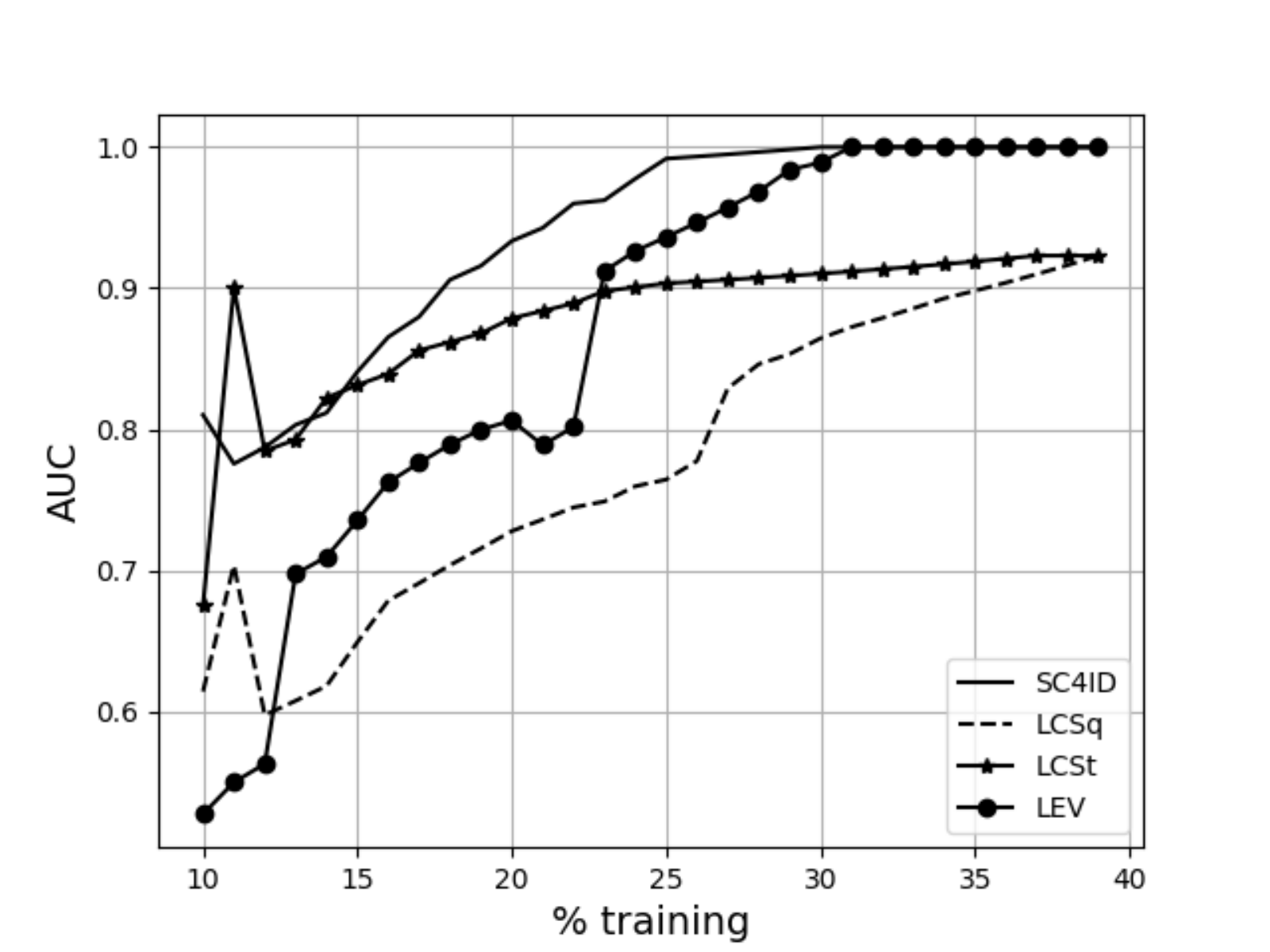}
\caption{UNM dataset: AUC curves as a function of the size of the training set as a percentage for SC4ID (plain curve), LEV (circle), LCSt (star), and LCSq (dotted line) similarity measures.}
\label{fig:UNM_Aucs-vs-trainSize}
\end{figure}

\section{Experiment}

\begin{table}[!h]
\begin{center}
\begin{tabular}{|c|c|c|}
\hline
 & UNM & ADFA-LD\\
 \hline \hline
 $|S|$ & 81 & 5951 \\
 \hline
 min seq. length & 7 & 75 \\
 \hline
 max seq. length & 183018 & 4494 \\
 \hline
 mean seq. length& 9031 & 462 \\
 \hline
 std seq. length & 25111 & 522\\
 \hline
\end{tabular}
\end{center}
\caption{Some statistics on the UNM and ADFA-LD datasets.}
\label{tab:stats-UNM-ADFA-LD}
\end{table}

We have evaluated the SC4ID algorithm on two well-known system call datasets provided respectively by the University of New Mexico (UNM) \cite{UNM1998} and the Australian Centre for Cyber-Security, referred to as ADFA-LD \cite{Creech2013}. Some statistics summarizing the content of these datasets are given in Table \ref{tab:stats-UNM-ADFA-LD}. We have selected these two datasets because i) their use by the research community in intrusion detection covers a wide time span (1998-2013) and ii) numerous results have been reported using these datasets in various settings, which enable to compare our findings to the state of the art. \\

We have used as baselines, three  similarity measures commonly used in text processing or bioinformatics, namely the Levenshtein's distance (LEV) \cite{Levenshtein66, Wagner1974}, the Longest Common Subsequence (LCSq) \cite{Maier1978, Bergroth2000} and the Longest Common Substring (LCSt) \cite{Gusfield1997}.\\

The longest common substring problem, also known as the longest common contiguous subsequence problem, is to find the longest string that is a substring of two or more strings. The pairwise similarity based on the longest common substring that we used is normalized as follows :
\begin{equation}
\mathcal{S}_{LCSt} (s_1,s_2) = \frac{LCSt(s1,s2)}{max(|s1|,|s2|)}
\end{equation}
where $LCSt(s1,s2)$ is the length of the longest common substring of $s_1$ and $s_2$. An implementation using suffix-trees leads to a $O(n+m)$ algorithmic complexity, where $m$ and $n$ are the lengths of the pair of strings or sequences that are compared.
 
The Longest Common Subsequence is similar to a longest common substring problem, except that the symbols of the longest subsequences do not need to be contiguous in the input sequences, namely gaps are authorized. The pairwise similarity based on the longest common subsequence that we used is normalized as follows :
\begin{equation}
\mathcal{S}_{LCSq} (s_1,s_2) = \frac{LCSq(s1,s2)}{max(|s1|,|s2|)}
\end{equation}
where $LCSq(s1,s2)$ is the length of a longest common subsequence of $s_1$ and $s_2$.
Using a dynamic programming implementation, the algorithmic complexity of LCSq distance is $O(nm)$, where $m$ and $n$ are the lengths of the pair of strings or sequences that are compared. The LCSq distance is conceptually close to the Smith and Waterman \cite{Smith1981} distance used in Bioinformatics.\\
 
The Levenshtein's distance, also known as the edit distance, is the minimum number of single-symbolic edit operations (insertions, deletions or substitutions) required to change one string or sequence into the other. Using a dynamic programming implementation, the algorithmic complexity of the Levenshtein's distance is $O(nm)$, where $m$ and $n$ are the lengths of the pairs of strings or sequences that are compared. The Levenshtein's distance is conceptually close to the Needleman Wunsch \cite{Needleman1970} distance used in Bioinformatics. \\
The pairwise similarity based on the Levenshtein's distance that we used is normalized as follows :
\begin{equation}
\mathcal{S}_{LEV} (s_1,s_2) = 1 - \frac{LEV(s1,s2)}{max(|s1|+|s2|)}
\end{equation}
where $LEV(s1,s2)$ is the Levenshtein's distance between $s_1$ and $s_2$.\\

The averaged costs for the computation of the similarity matrix required to solve the HIDS problem are
\begin{itemize}
\item $O(k\cdot n \cdot log n)$ for $\mathcal{S}(s1,S)$
\item $O(N\cdot n)$ for $\mathcal{S}_{LCSt}$
\item  $O(N\cdot n^2)$ for $\mathcal{S}_{LCSq}$ and $\mathcal{S}_{LEV}$
\end{itemize}
where $N$ is the size of the normal sequences used as training, $n$ is the average length of the training and test sequences and $k$ is the average length of the $S$-optimal coverings of the test sequences.

To assess the experimented algorithms on the binary classification task that consists of separating attacks from normal data, we use the \textit{receiver operating characteristic} (ROC) curve that shows the detection rate (true positive rate) as a function of the false alarm rate (false positive rate) as the discrimination threshold of the classifier is varied. We use the \textit{Area Under the ROC curve} (AUC) as the global assessment measure.

\subsection{The UNM dataset}

For our first experiment, we have considered the sendmail system call traces from the relatively old UNM dataset \cite{Hofmeyr1998}. These synthetic data were collected at UNM on Sun SPARCstations running unpatched SunOS 4.1.1 and 4.1.4 with the included sendmail.  We have adopted the setting described in  \cite{Yolacan2014}, basically $68$ unique process traces in the normal dataset (uniqueness is to ensure that no sequence appears simultaneously in the training and validation subsets) and $13$ process traces in the abnormal dataset (we have kept the $3$ duplicated attack sequences in this set).

The experimental protocol for this data set and the covering similarity is as follows
\begin{enumerate}
\item \underline{Initialization}: randomly select $10\%$ of the normal data to build the 'normal' model, i.e. the initial set of normal data, $S$, from which the tested similarity will be evaluated.
\item \underline{Evaluation}: for each of the remaining normal and attack sequences, $s$ , evaluate the covering similarity $\mathscr{S}(s,S)$. Then rank the normal data according to their similarity score. Finally evaluate the ROC curve, and assessment measure (AUC).
\item \underline{Selection}: from the previously ranked normal data, select the normal sequence with the worst similarity score, $s^-$, and update the normal model with training set $S := S \cup \{s^-\}$. Then loop in step $2$ until $50\%$ of the normal data is used in training. \\
\end{enumerate}

The same protocol is used for the other tested similarities, except that the normalized distance to the closest 'normal' sequence is used instead of $\mathscr{S}(s,S)$.\\

Fig. \ref{fig:umn_distrib_roc} presents the histogram of the covering similarity values (left column) for the 'normal' (blue) and 'attack' (red) data, the ordered  similarity values in increasing order (middle column) for the 'normal' (blue dotted line) and attack data (red, continuous line) and the ROC curve (right column). In this figure, the top row corresponds to a situation for which $10\%$ of the normal data has been randomly selected for training, the middle row where the initial $10\%$ of training data has been enriched using $12\%$ of the remaining normal data corresponding to the lowest covering similarity scores, and finally, the bottom row corresponds to an enrichment of the initial $10\%$ of training data using $28\%$ of the remaining normal data. From top to bottom, we show that the model improves its capacity to separate attack data from normal data: the AUC value that is initially $.81$, reaches $.94$ when $22\%$ of the 'normal' data is used as training and $1.0$ when 38\% of the 'normal' data is used as training. In the middle column, we see that the similarity scores for the normal data is progressively tangent to the $1$ constant curve (maximal similarity value), while, for the attack data, it generally stays much lower, although it increases slightly when the size of the training set increases.\\

Figure \ref{fig:UNM_Auc-vs-trainSize} presents the AUC value obtained by the SC4ID algorithm as the enrichment of the training data increases. $5$ different runs have been carried out, each one being initialized by randomly selecting $10\%$ of the normal data. The grey area corresponds to the +1/-1 standard deviation, and the blue dotted line is the average curve. We can see on this figure that the SC4ID algorithm improves quite rapidly until reaching a perfect separation of the normal and attack data when $38\%$ of the normal data is used, regardless of the initialization. This is a major and very promising result, since the instance selection that is performed directly from the covering similarity scores is working particularly well on the UNM data.  \\

Figure \ref{fig:UNM_Aucs-vs-trainSize} compares the 4 tested algorithms on a single run . In the Figure, one can see that the SC4ID algorithm (plain curve) is the first to reach a perfect AUC value when about 30\% of the available normal data is used as training. The Levenshtein similarity based algorithm (LEV) is the second to reach a perfect AUC value when about 31\% of the available normal data is used as training data. The LCSq algorithm reaches an asymptotic AUC value of $.923$ and is unable to separate normal data from attack data. Due to the presence of gaps in  the longest common subsequences that are extracted, some attacks become too similar to normal data and perfect separation is no longer possible. We deduce from this result that the Smith and Waterman distance used in bioinformatics will not perform well either in this classification task.  Finally, the LCSt distance will reach a perfect AUC value when 57\% of the available normal data is used as training data, which is almost twice the size of training data than it was necessary for SC4ID or LEV.\\

\begin{table}
\begin{center}
\begin{tabular}{c|c}
 & \textbf{Elapsed time} \\
 \textbf{Similarity} & \textbf{(sec.)}\\
\hline\hline
SC4ID & 454\\
LCSt & 653 \\
LCSq(*) & 27023\\
LEV(*) & 37296
\end{tabular}
\end{center}
\caption{Elapsed time for the four tested algorithms. SC4ID: Covering Similarity, LCSt: Longest Common Substring, LCSq: Longest (non contiguous) Common Subsequence, LEV: Levenshtein. (*) note that for LCSq and LEV, 20 processor cores have been used.}
\label{tab:UNM-elapsedTime}
\end{table}

Table \ref{tab:UNM-elapsedTime} gives the average elapsed time for each enrichment iteration. We can clearly see that SC4ID is the faster algorithm, LCSt takes 44\% more time and LCSq and LEV take more than 6000\% more time than SC4ID. Note that LEV and LCSq were running on a 20 cores architecture while SC4ID and LCSt  have not been parallelized, due to their suffix-tree implementation, and were running on a single core.

\subsection{The ADFA-LD dataset}
The second  experiment involves a much more recent benchmark dataset for HIDS assessment. Yhis benchmark has been designed by the Australian Centre OF Cyber-Security (ACCS). 
According to the authors, the ADFA-LD dataset \cite{Creech2013, Creech2014} has been designed to reflect modern hacking techniques, whilst using modern patched software as its backbone. It is composed with a training set composed with $833$ normal sequences, a validation set composed with $4373$ normal sequences and a set of $746$ attack sequences partitioned into $6$ categories referred to as HydraFTP, Hydra-SSH, Adduser, Java-Meterpreter, Meter-preter and Webshell.  
 
Similarly to the UNM experiment, the experimental protocol for this data set is as follows :
\begin{enumerate}
\item Initialization: select the 833 sequences of the normal training data to build the 'normal' model, i.e. the initial set of normal data, $S$, against which the covering similarity will be evaluated.
\item Evaluation: for all the remaining normal and all attack sequences, $s$, evaluate the covering similarity $\mathscr{S}(s,S)$. Then rank the normal data according to their similarity scores. Finally evaluate the ROC curve and assessment metrics.
\item From the previously ranked normal data, select 100 sequences of normal data with the worst similarity score, $S_{100}$, update the normal model $S := S \cup S_{100}$ and loop in step $2$ until $50\%$ of the normal data is used in training. 
\end{enumerate}

\begin{figure}
\includegraphics[scale=.5]{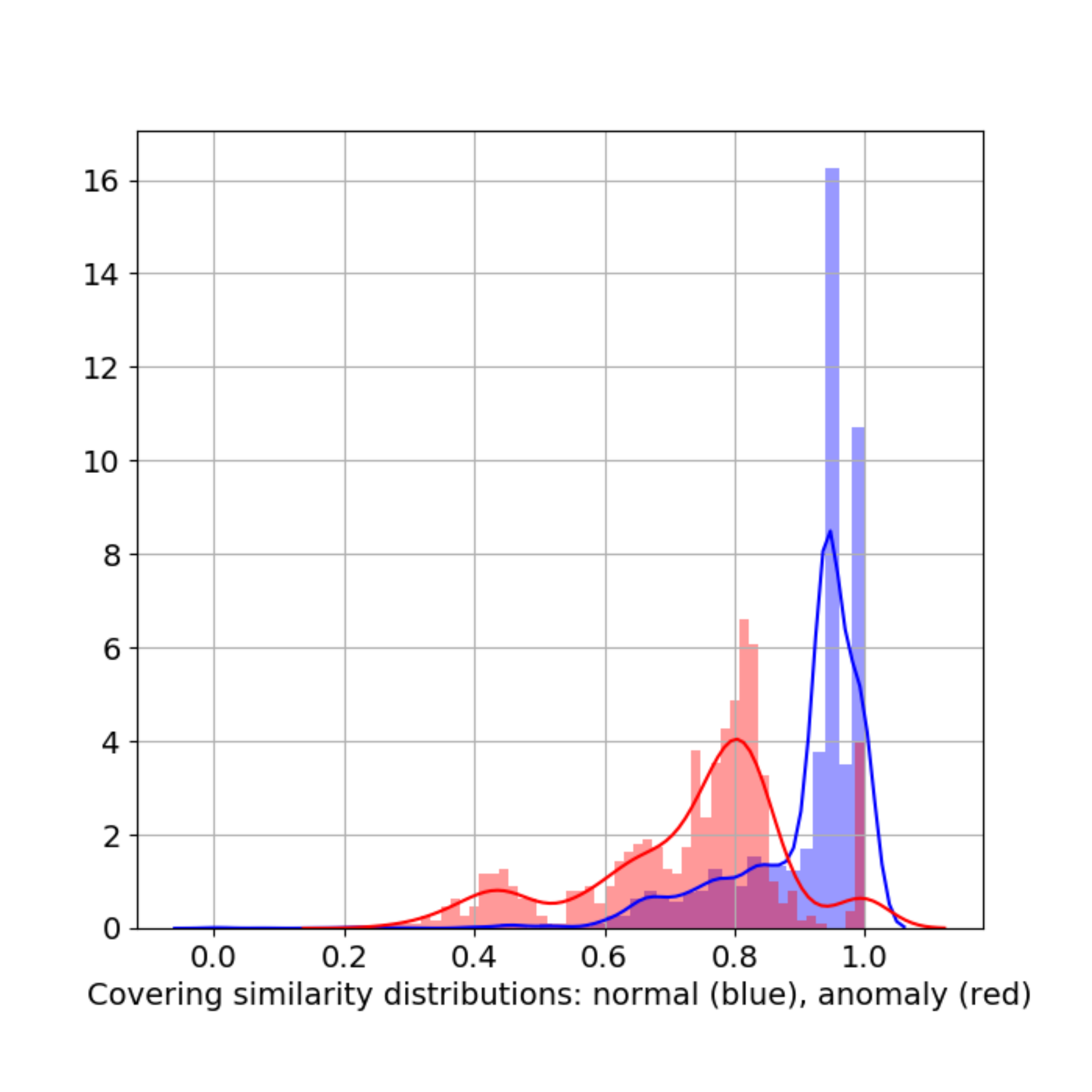}\\
\includegraphics[scale=.5]{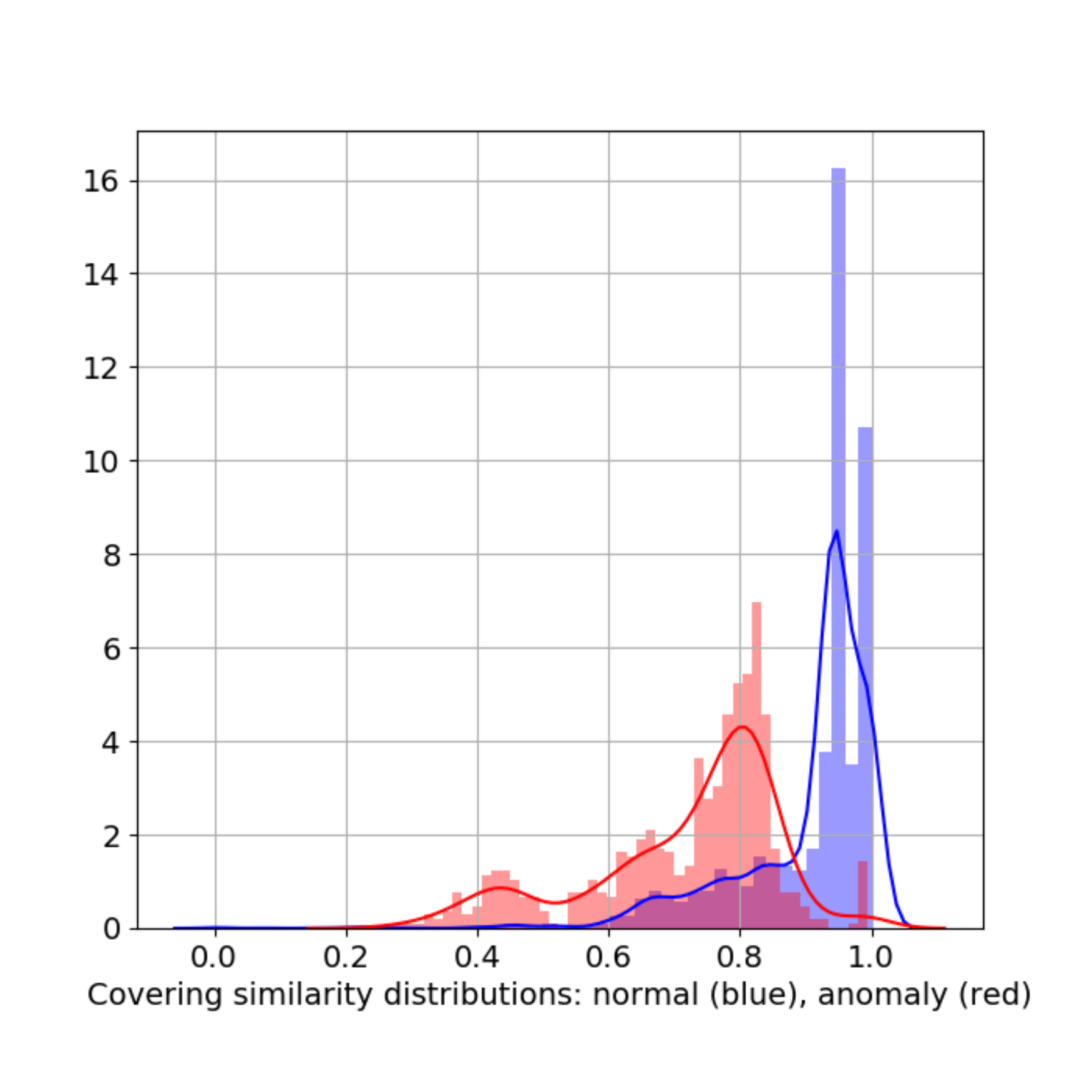}
\caption{Histogram of the covering similarity for normal validation data (blue), and attack data (red). Top: all the attacks are considered. Notice the red pick with maximal similarity, $1.0$, corresponding to attacks that are exact subsequences of the train sequences. Bottom: the same histogram when the $32$ attacks with covering similarity equal to $1.0$ are removed. Notice the remaining small peak of high similarity close to $1.0$}
\label{fig:ADFA-LD_HIST833}
\end{figure}
 
\begin{figure*}
\begin{tabular}{|ccc|}
      \hline
      \includegraphics[width=60mm, height=60mm]{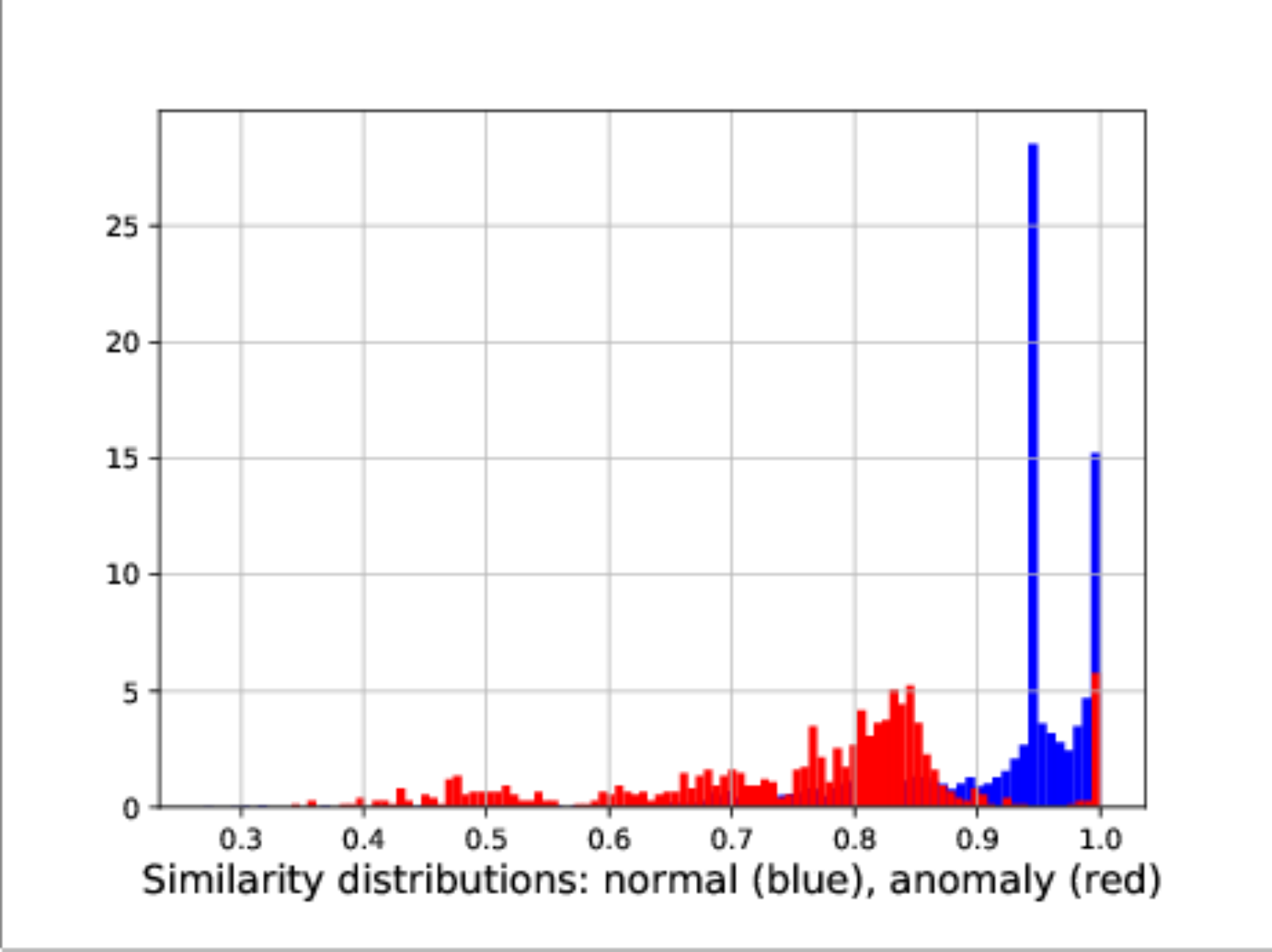} &
      \includegraphics[width=60mm, height=60mm]{coveringSimilarity_0.pdf} &
      \includegraphics[width=60mm, height=60mm]{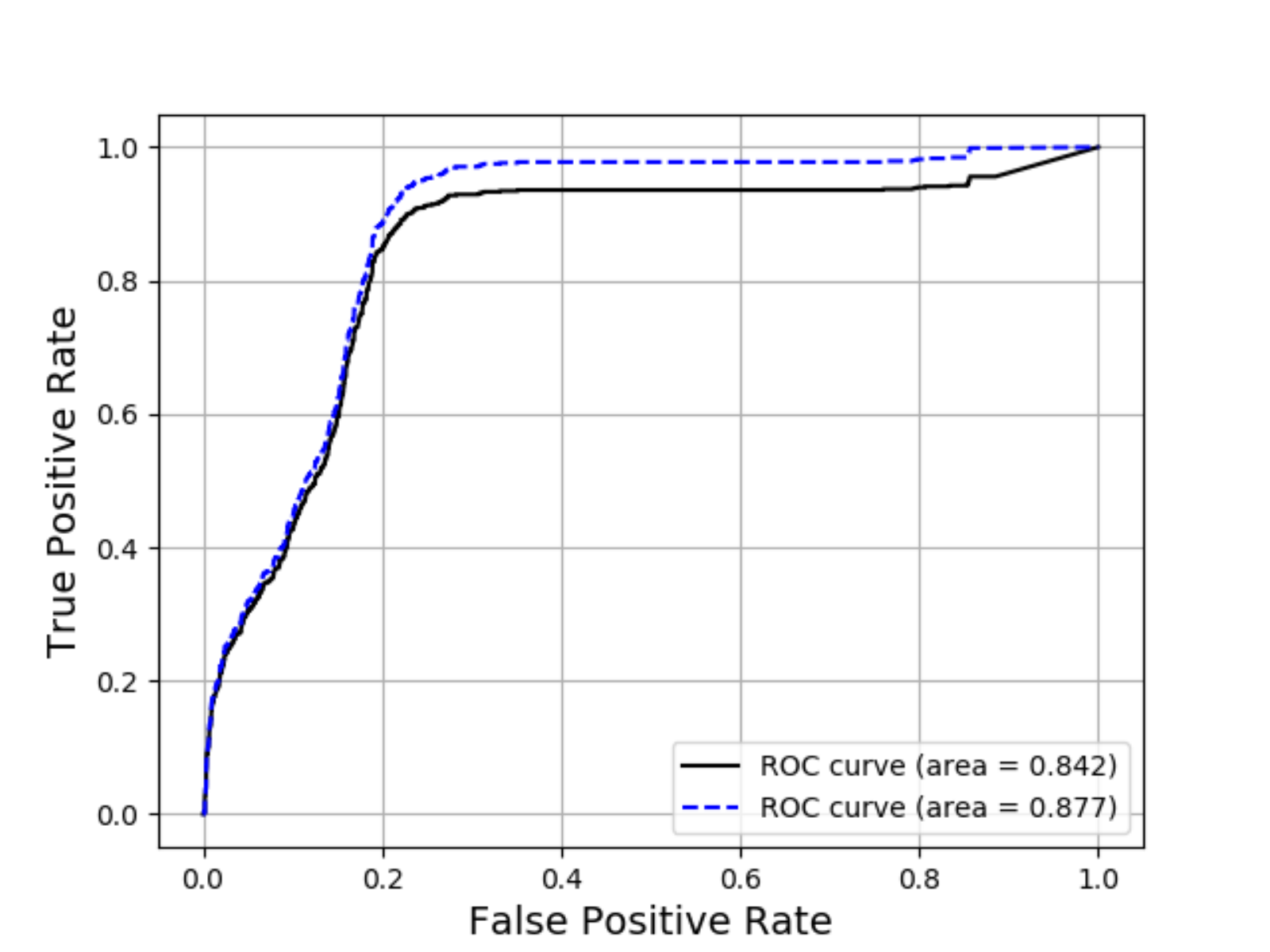} \\
      \hline
      \includegraphics[width=60mm, height=60mm]{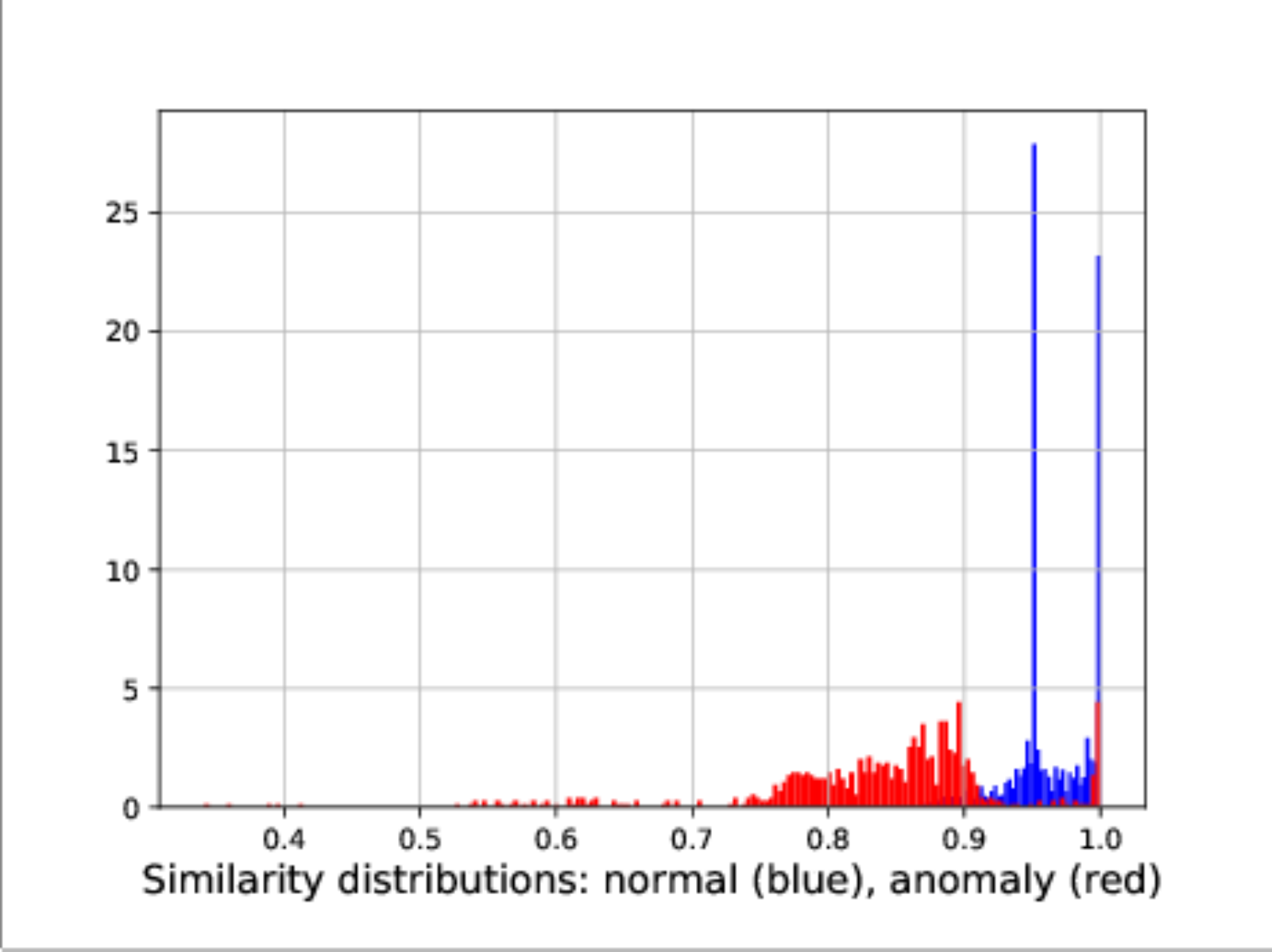} &
      \includegraphics[width=60mm, height=60mm]{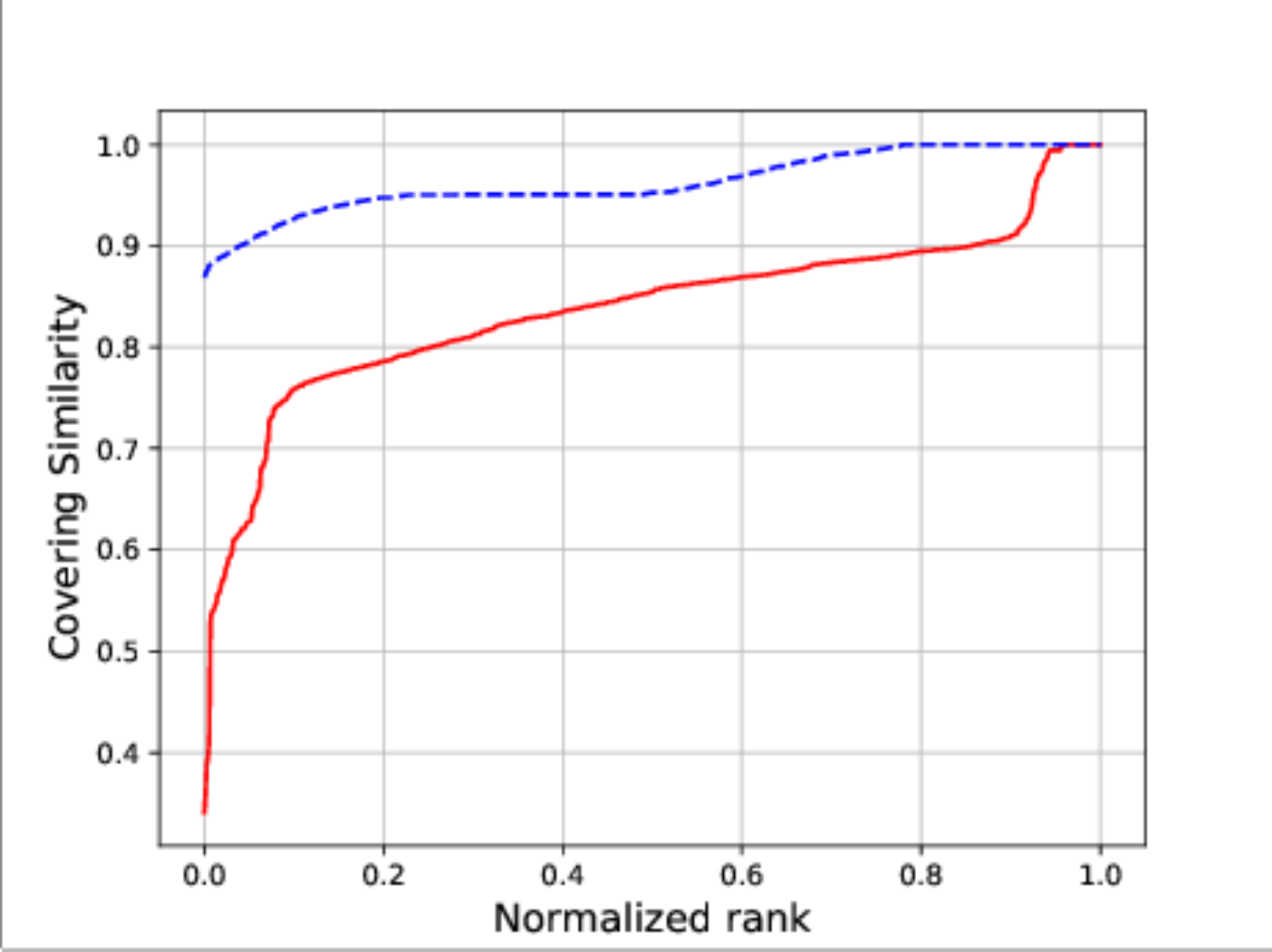} &
      \includegraphics[width=60mm, height=60mm]{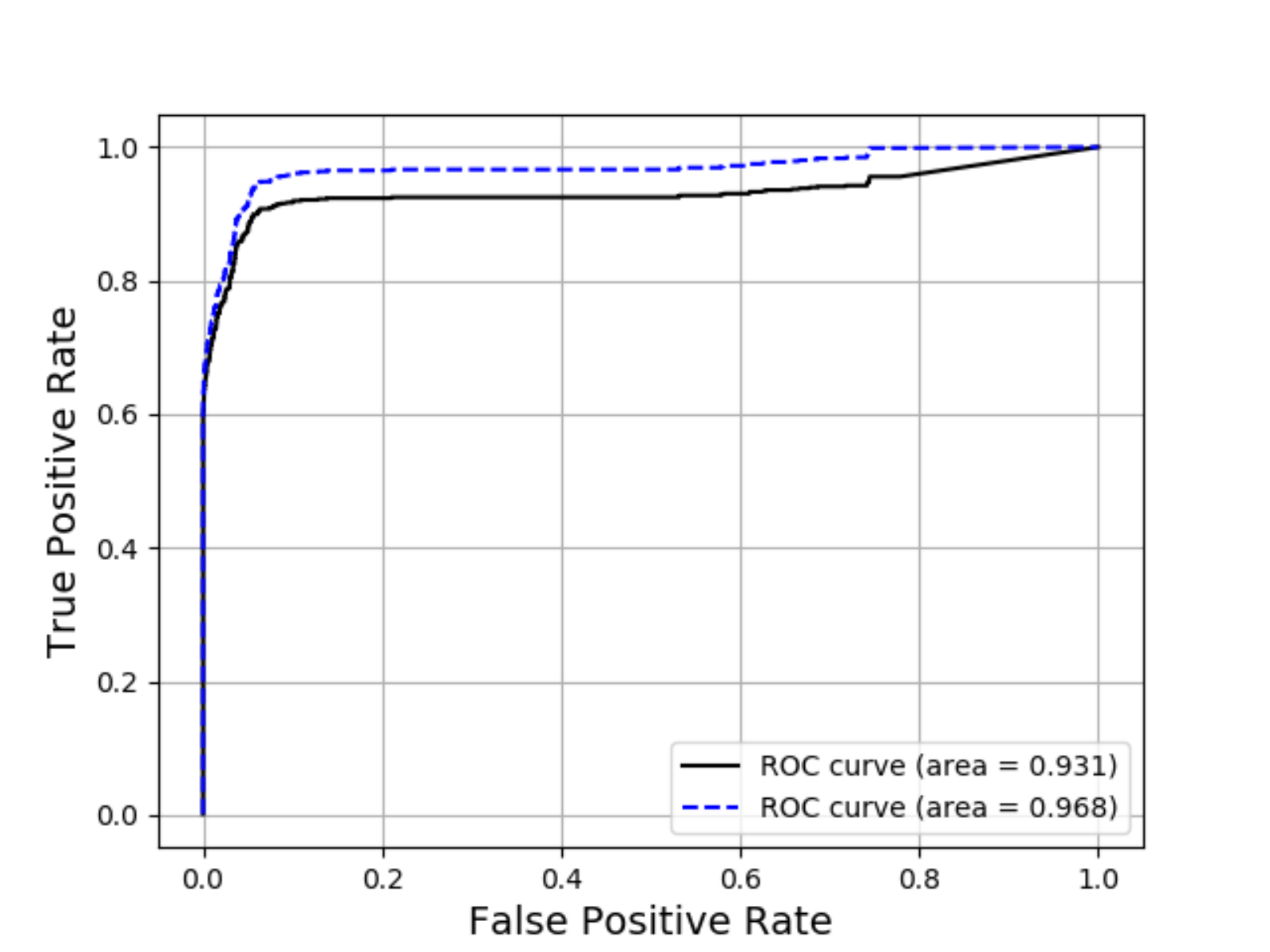} \\
      \hline
      \includegraphics[width=60mm, height=60mm]{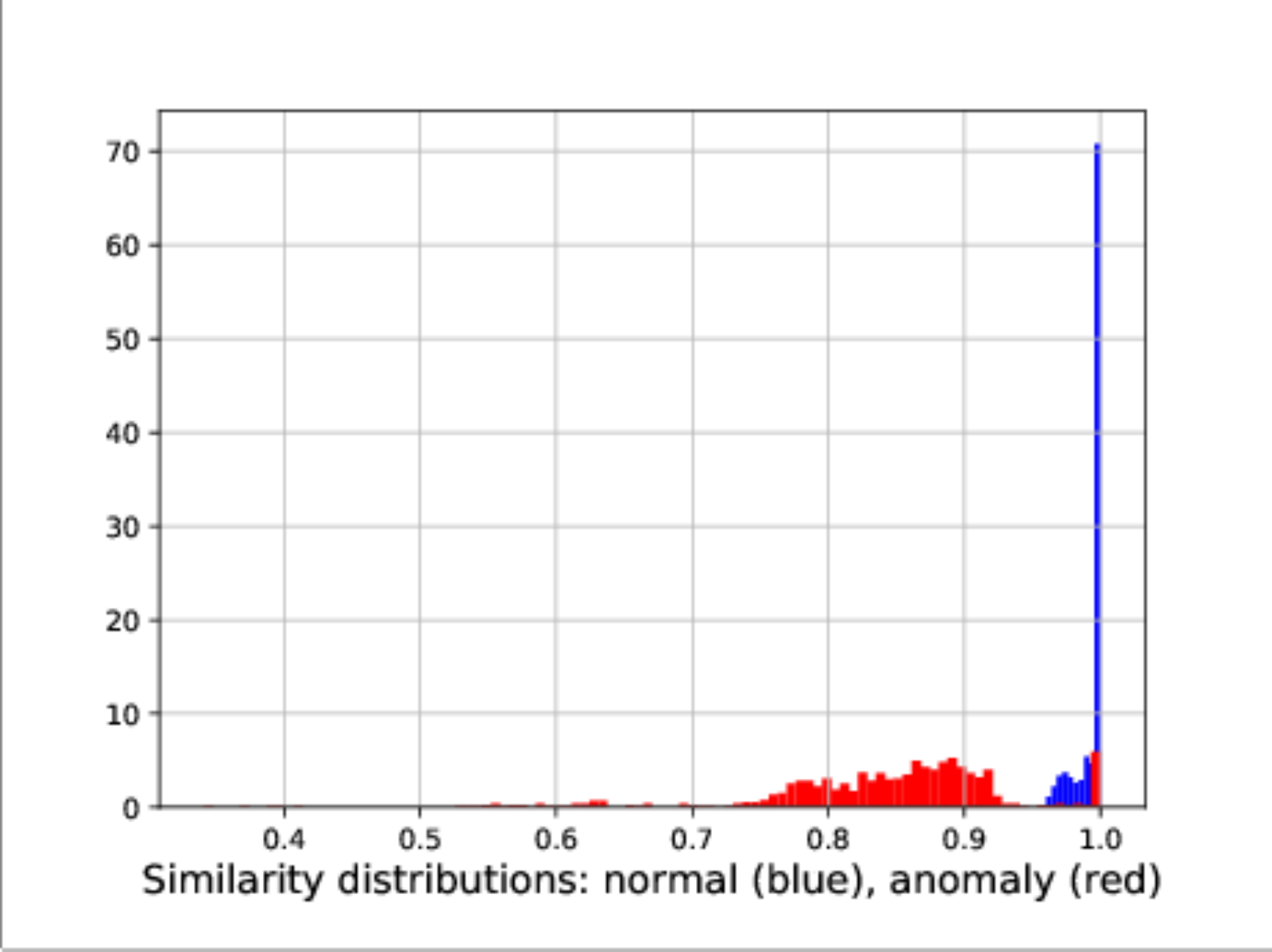} &
      \includegraphics[width=60mm, height=60mm]{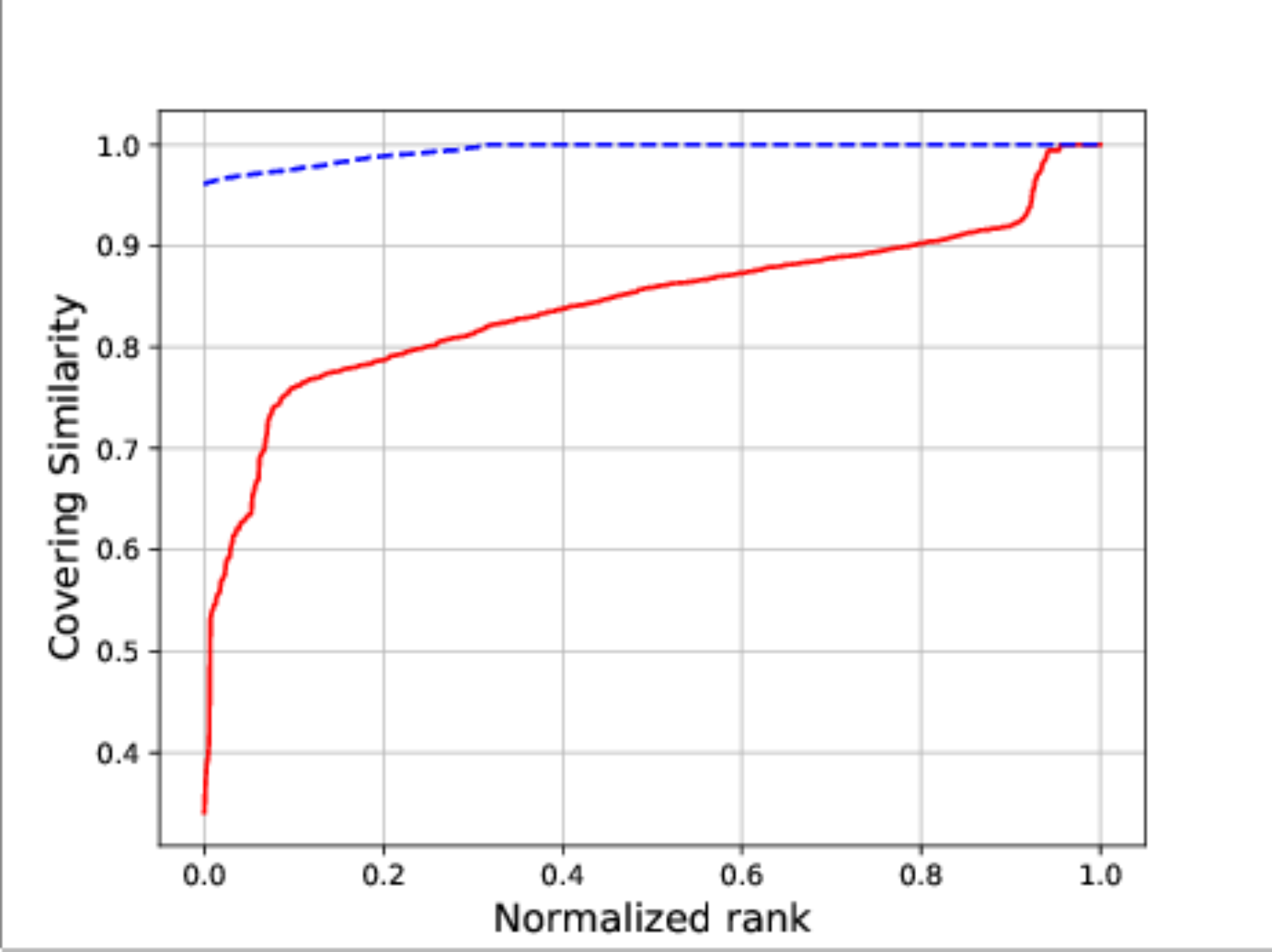} &
      \includegraphics[width=60mm, height=60mm]{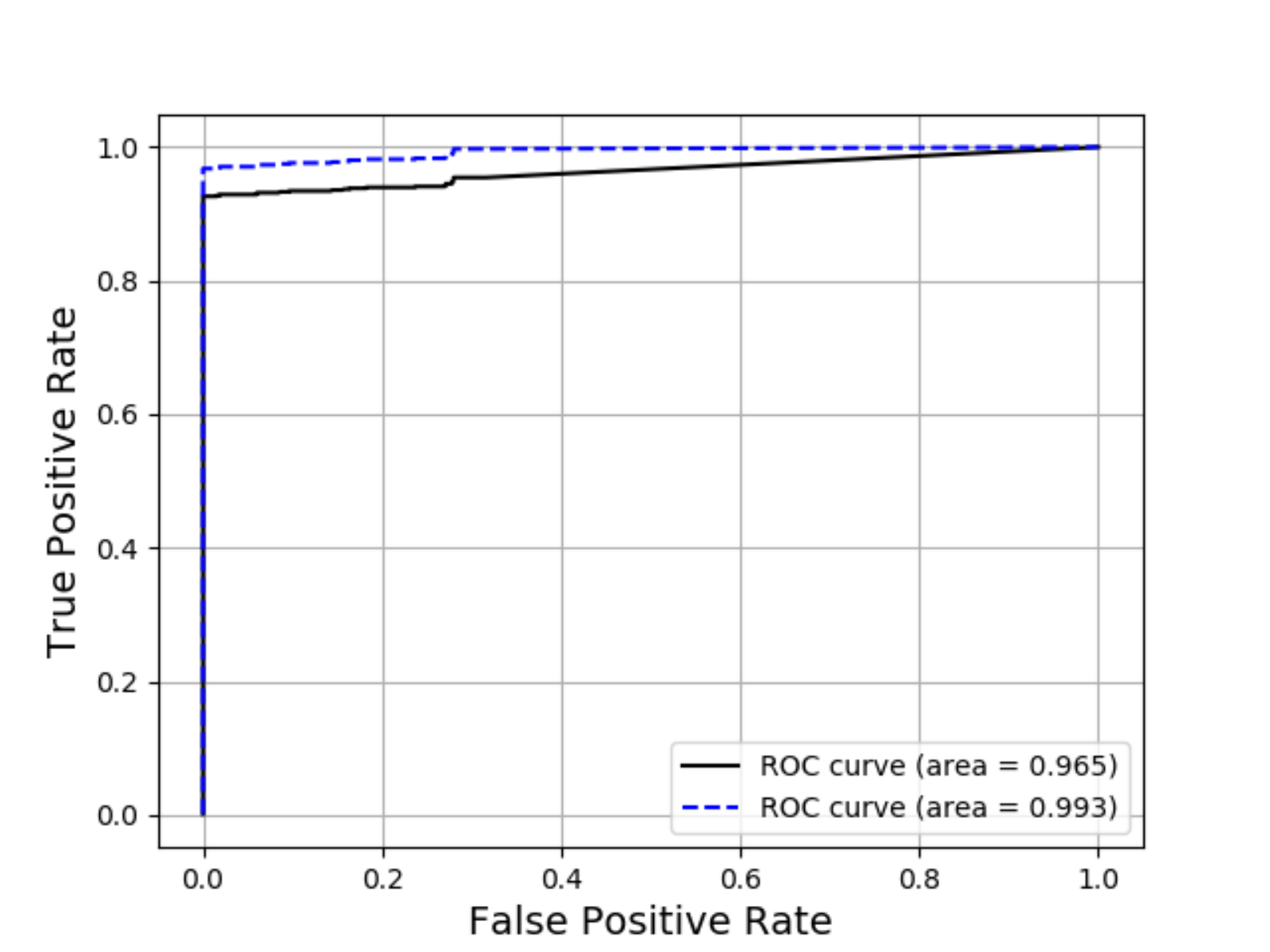} \\
      \hline
\end{tabular}
\caption{ADFA-LD dataset: histograms of the covering similarity distributions $\mathscr{S}_c(s,S)$ (left column),  ranked normalized covering similarities $\mathscr{S}_c(s,S)$ (middle column), ROC curves (right column), when $833$ normal data  (top row), $833$ + $500$ normal data (middle row) and $833$ + $1000$ normal data (bottom row) is used for training.}
\label{fig:adfa-ld_distrib_roc}
\end{figure*}

\begin{figure}
\includegraphics[scale=.5]{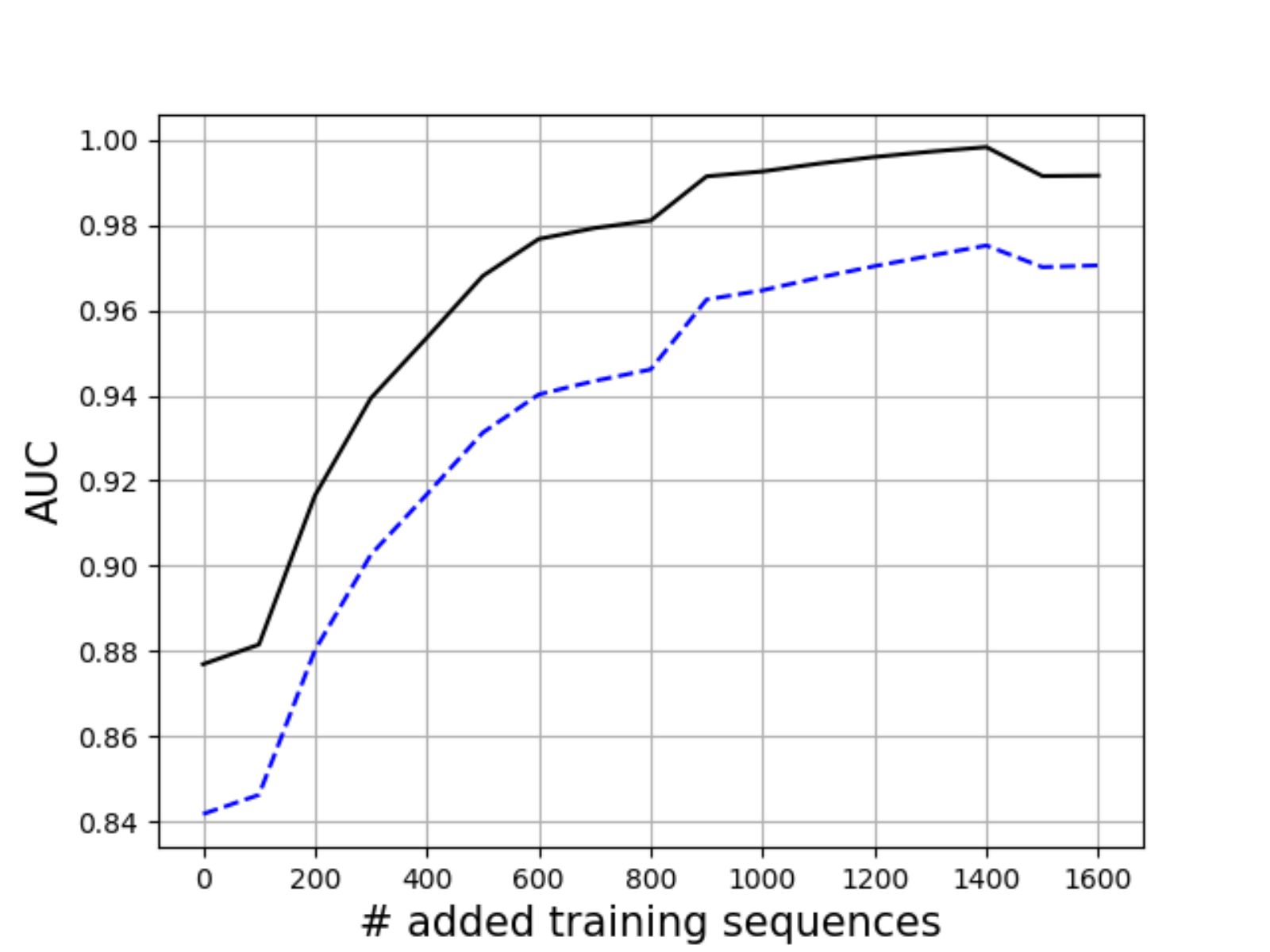}
\caption{ADFA-LD dataset: AUC curves as a function of the number of normal data added to the initial training set. The black continuous line corresponds to the situation for which the sequences of attacks that are an exact subsequence of a normal data have been removed. The blue dotted line corresponds to the case where no sequence of attacks has been removed.}
\label{fig:adfa-ls_Auc-vs-trainSize}
\end{figure}

When only $833$ normal training sequences are used to train SC4ID, one acquires the histogram of similarity scores given in the top of Fig. \ref{fig:ADFA-LD_HIST833}. A red peak of maximal similarity ($1.0$) exists, meaning that some attack data (actually $32$ sequences) are exact subsequences of the 833 normal training data. When we remove these $32$ attack sequences, one observes a small remaining red peak with high covering similarity close to $1.0$. This peak corresponds to 11 attack sequences that have a covering of size $2$, i.e. that are composed of two subsequences that belong to the set of training data.

This situation for which attack sequences can be exact subsequences of normal data is a bit surprising. 
We do not have to investigate the nature of such subsequences. 
Rather, we take the opportunity to highlight an obvious limitation of SC4ID: it is completely blind to this kind of situation.  Complementary approaches are required to specifically address this issue if any such situation exists.

Fig. \ref{fig:adfa-ld_distrib_roc} presents the histogram of the covering similarity values (left column) for the 'normal' (blue) and 'attack' (red) data, the ordered  similarity values in increasing order (middle column) for the 'normal' (blue dotted line) and attack data (red, continuous line) and two ROC curves (right column): the black continuous line curve corresponds to the situation where all the attack data is retained, the blue dotted line corresponds to the situation where the $32$ attack sequences that are exact subsequences of the $833$ normal training data have been removed. In this figure, the top row corresponds to the situation where only the $833$ normal training sequences are used for training, the middle row where the initial $833$ sequences have been enriched using $500$ sequences of the remaining normal data that have the lowest  covering similarity,  and finally the bottom row corresponds to a an enrichment of the initial 833 training sequences using the $1000$ sequences of the remaining normal data having the lowest covering similarity. From top to bottom, we show that the model improves its capacity to separate attack data from normal data: the AUC value that is initially $.84/.88$, reaches $.93/.97$ when $500$ normal sequences have been added to the initial training data and $.96/99$ when $1000$ normal sequences have been added to the $833$ initial training set. Similarly as observed for the UMN dataset, in the middle column, we see that the similarity score for the normal data is progressively tangent to the $1$ constant curve, while, for the attack data, it generally stays much lower, although it does increase slightly.

Fig. \ref{fig:adfa-ls_Auc-vs-trainSize} presents for this experiment the AUC values for the algorithm as the enrichment of the training data increases. We can see on this figure that the SC4ID algorithm improves rapidly until reaching an almost perfect separation (when the attacks that are subsequences of the training data have been removed) of the normal and attack data when $1400$ normal data have been used to enrich the initial $833$ training set. Here again,  the instance selection that is performed directly from the covering similarity scores is working particularly well. Nevertheless, we notice  after adding $1400$ sequences that the algorithm is a little less efficient with an AUC value that drops from $.975/.998$ to $.971/.992$. The explanation is that a remaining few attacks (precisely the $11$ sequences mentioned earlier that have a covering of size $2$ when using only the $833$ training sequences) become subsequences of some of the last added normal subsequences to the training set. The similarity for these few attacks reaches the maximal value, namely $1.0$, and  they cannot be separated anymore, hence a lower AUC value. 

\subsection{Setting up the $\sigma$ parameter}
\label{subsec:sigma}
The covering similarity being normalized in $[0,1]$, the choice for $\sigma$ is constrained in the unit interval. Fig. \ref{fig:umn_distrib_roc} and Fig. \ref{fig:adfa-ld_distrib_roc} show that, when enough normal training data has been selected using the procedure that consists of selecting first the normal sequences having the lowest similarity score, we can reach an almost perfect separation between attack and normal sequences using $\sigma \lessapprox 1.0$. According to our experiments on system calls, selecting $\sigma=.97$ will lead to the result that small size coverings  (no more than 3\% of the size of the test sequence) will lead to a high detection rate with a very low false alarm rate. It also leads to a relatively small training set since only about 30\% of the available normal data are used for training in this case. We doubt that a general rule for selecting $\sigma$ exists. The fine-tuning of $\sigma$ is application-dependent.

\subsection{Runtime consideration}
\begin{figure}
\includegraphics[scale=.5]{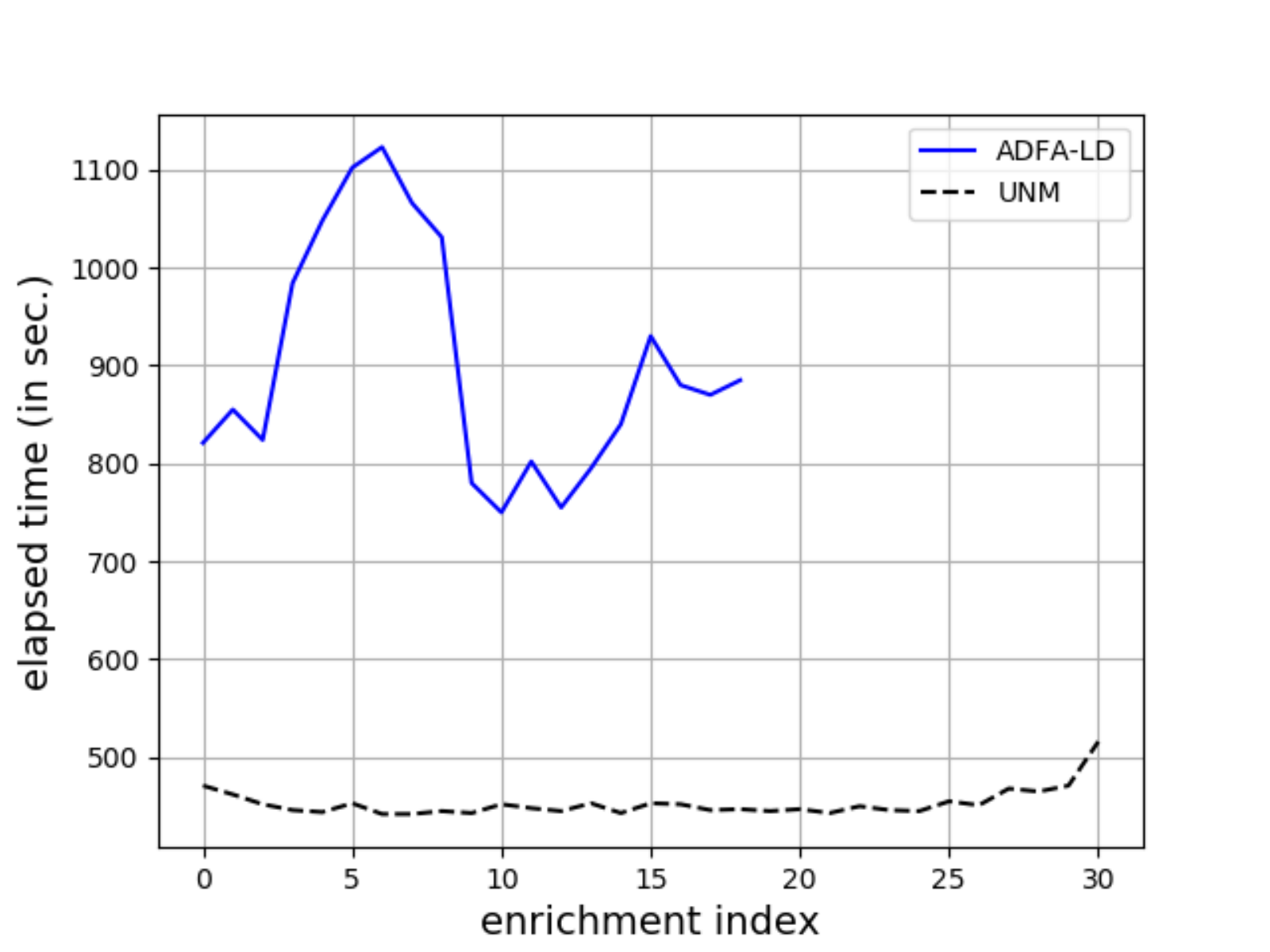}
\caption{Elapsed time in seconds during each step of the enrichment procedure for the UNM (dotted black line) and ADFA-LD (continuous blue line) datasets.}
\label{fig:elapsed-time}
\end{figure}

We have not been able to provide a comparative study on the ADFA-LD dataset, due to the too high time complexity of the three other similarities (LEV, LCSt, LCSq). In particular, given the dataset statistics presented in Table \ref{tab:stats-UNM-ADFA-LD} and the quadratic complexity of LEV and LCSq measures, we estimate that,  if we use these two measures, it would require about 5 months of computation for achieving the results on a simple core of Intel i7 architecture, while it takes about 15 minutes for SC4ID using the same hardware. Consequently, we only discuss hereafter runtime matters for the SC4ID algorithm.

Fig. \ref{fig:elapsed-time} presents the elapsed time expressed in seconds for each iteration of the enrichment process carried out by SC4ID for the UNM (dotted curve) and ADFA-LD (continuous curve) datasets. Each step of this process involves the construction of a suffix-tree based on the training (normal) data, $S$, and the extraction of the $S$-optimal coverings for the validation (normal) and attack data. These curves have been obtained on an Intel(R) Core(TM) i7-6700HQ CPU @ 2.60GHz laptop running an Ubuntu $16.10$ release. Note that between two successive tests, the training data increases by one sequence for the UNM dataset and $100$ sequences for the ADFA-LD dataset; conversely the validation (normal) dataset is reduced by one sequence and $100$ sequences respectively. Hence, as the enrichment increases, the number of $S$-optimal coverings to extract decreases.

We observe that the elapsed time is quite stable, around $450$ seconds, for the UNM dataset and is rather independent of the enrichment step. For the ADFA-LD dataset (which is much larger in size than the UNM dataset), the elapsed time varies around an average of about $900$ seconds with a large standard deviation (about $100$ sec.). From the analysis of the logs collected during our experiments, we found that i) the time required to construct the suffix-tree is marginal, ii) the elapsed time progressively shifts from the extraction of the coverings for the normal sequences of the validation set to the extraction of the coverings for the attack data. This is due partly to the fact that the amount of normal validation data to evaluate decreases, but also because the extraction is becoming much easier for the normal data because the size of the covering becomes very small in average as the enrichment progresses (similarity scores are tangent to the maximal similarity score for normal data).

Due to the variability in length of the sequences that are added to the training data, it is difficult to progress deeper into this analysis. Nevertheless, these curves show that, i) for the various configurations corresponding to each step of the enrichment process, the extraction of the $S$-optimal coverings is bounded in time and ii) the two addressed problems are processed in a quite reasonable elapsed time. 

\section{Discussion}

The UNM dataset, although relatively old and somewhat outdated, gives a general view of the ability of the SC4ID algorithm to separate normal data from attacks sequences. Using this this dataset, we have been able to compare SC4ID with three other similarity measures whose variants are classically used in bioinformatics and text mining. From this comparative study we can conclude that SC4ID is as accurate as the Levenshtein's distance but is much faster to evaluate. SC4ID is also much more accurate and faster to evaluate than the two other similarity measures. 

We have not been able to provide a similar comparative study on the ADFA-LD dataset due to the time complexity of the other similarity measures.

The fact that the first results highlighted on the UNM dataset for the SC4ID algorithm are still observable on a much newer and reputedly difficult benchmark such as the ADFA-LD dataset is particularly promising. In both cases, the instance selection ability of SC4ID enables the rapid improvement of the separability of attack sequences from the normal ones. When the training data are sufficiently representative of the normal activity, expressed in terms of system call traces, the algorithm performs an almost perfect separation. We have shown that, if the perfect separation between normal and attack sequences is not achieved for the ADFA-LD dataset, this is because some attack sequences actually correspond exactly to some subsequences of the normal sequences that are used to train the model. If this kind of overlap is suppressed then one can expect a perfect separability as shown in Fig. \ref{fig:adfa-ls_Auc-vs-trainSize}.

In addition, the covering similarity can be used efficiently to decide whether a given normal sequence should be part of the training data or not. Basically, when a lot of normal data is available, which is actually the case for HIDS application, it offers a data selection scheme with a stop condition that can be easily set up according to the middle columns of Fig. \ref{fig:umn_distrib_roc} and \ref{fig:adfa-ld_distrib_roc}. When the ranked normal similarity curve is tangent to the 1-constant horizontal line, SC4ID will not further improve its capability to isolate anomaly data (anomaly being clearly defined in the context of the considered set of normal data). 

To give some hints about how SC4ID compares with the state of the art methods that have been tested on the ADFA-LD dataset, we report hereinafter the results we found in the recent literature:

In \cite{Xie2013}, bag of words and vector space models with \textit{tf} and \textit{tf-idf} weightings were used. Authors report a slightly less than 80\% in accuracy for a False Positive rate of 30\%.

In \cite{Xie2014}, a one-class SVM was evaluated using a feature vector composed of n-grams of length $5$. The authors report an average accuracy of $70\%$  at a False Positive Rate of about $20\%$.

In \cite{Borisaniya2015}, a $10$ fold cross-validation supervised classification (which is a much easier task than the anomaly detection task  we are considering in this study, since attack data is used to train the classifiers) has been conducted. Authors report a AUC=0.93 for the best tested method (k-nn with k=3), using an 'enhanced' vector  space model with n-grams (n=2,3,4,5).

In \cite{Kim2016}, a deep learning ensemble approach (LSTM) has been evaluated. Authors report an  AUC value of 0.928 for an aggregating method that is somewhat questionable, and an AUC value of 0.859 for a classical voting ensemble method. This implies that for a single LSTM model, the AUC value would be at most $.86$. 

In \cite{Creech2014}, a combinatorics 'semantic' approach has been set up. It requires enumerating all phrases of 5 words with gaps, each word  consisting of any subsequence (of any length) extracted from the training data (we estimate at about $6.10^16$ the number of such phrases). The authors, actually the designers of the ADFA-LD dataset, report an $AUC=.95$ after several weeks of computing effort. 

In \cite{Murtaza2015}, authors proposed to reduce system call traces by  applying trace abstraction techniques. The approach consists mainly of reducing the size of the alphabet, by using meta-symbols, i.e. subsets that partition the alphabet. Authors report a $12.69\%$ False Positive rate for a $100\%$ True Positive rate when using a HMM model.\\

All these results demonstrate that SC4ID is well positioned among leading-edge approaches.

\section{Conclusion}
In this paper, we have presented the design and implementation of SC4ID, a novel algorithm based on the concept of sequence covering, to detect abnormal sequences of system calls. It relates to a semi-supervised learning approach that it is quite efficient to detect zero-day attacks, as far as enough normal training data is available. Its main advantages compared to the state of the art approaches are: 
\begin{enumerate}[i)]
\item it is parameter-free, except for the decision threshold $\sigma$. Hence no assumption need to be made on the data, no windowing, no maximal length for the n-grams that are taken into account, no hidden architecture (HMM, LSTM) need to be defined, no meta parameters (SVM, RF) need to be tuned, etc.,
\item the $S$-optimal covering provided by the measure is interpretable and can be used to locate contextual and collective anomalies in long sequences,
\item it is incremental: the elements of the covering are progressively discovered and never modified afterwards. Hence the algorithm can be easily setup for an online exploitation,
\item it enables a fast learning curve, using the efficient sequence selection procedure that can be used to sample the (large)set of  normal sequences in order to build a minimal training set,
\item it scales well and runs easily on common hardware for medium to large size problems, thanks to its $n \cdot log(n)$ algorithmic complexity.
\end{enumerate}

However, we have pointed out a limitation of the SC4ID algorithm: it cannot separate sequences that are exact subsequences of the training set. 
However, any algorithm that uses historical data to provide a regressive or predictive scoring would also fail to correctly handle such a situation.  

As a perspective, we can expect a speed up of the algorithm when trace abstraction techniques, such as described in \cite{Murtaza2015}, are used to reduce system call traces.  This speed-up, hopefully, could potentially be obtained without losing (too much) on the accuracy. Parallelization of the algorithm is also an issue that can be addressed, in particular in the context of suffix-array implementations.

Finally, one may ask whether the covering similarity is able to offer some useful solutions in other domain such as bioinformatics, sequence mining (clustering, classification), plagiarism detection, etc. In particular, the pair-wise similarity defined in Eq. \ref{eq:SeqCoveringSimilarity} could be tested in various context, to evaluate the robustness and generality aspect of the concept of sequence covering.

%
%
%

\ifCLASSOPTIONcaptionsoff
  \newpage
\fi

\bibliographystyle{IEEEtran}
\bibliography{IEEEabrv,biblio}
%



\end{document}